\def\BibTeX{{\rm B\kern-.05em{\sc i\kern-.025em b}\kern-.08em
T\kern-.1667em\lower.7ex\hbox{E}\kern-.125emX}}
\newcommand{\herm}[0]{^{\mathsf{H}}}
\newtheorem{theorem}{Theorem}
\newtheorem{lemma}{Lemma}
\newtheorem{corollary}{Corollary}
\newacronym{OFDM}{OFDM}{orthogonal frequency division multiplexing}
\newacronym{DFT-s-OFDM}{DFT-s-OFDM}{DFT-spread OFDM}
\newacronym{FBMC}{FBMC}{filter bank multi-carrier}
\newacronym{GFDM}{GFDM}{generalized frequency division multiplexing}
\newacronym{OCDM}{OCDM}{orthogonal chirp division multiplexing}
\newacronym{AFDM}{AFDM}{affine frequency division multiplexing}
\newacronym{OTFS}{OTFS}{orthogonal time frequency space}
\newacronym{T-OTFS}{T-OTFS}{transcendentally-rotated OTFS}
\newacronym{ODDM}{ODDM}{orthogonal delay-Doppler division multiplexing}
\newacronym{SWHM}{SWHM}{sparse Walsh-Hadamard multiplexing}
\newacronym{FMCW}{FMCW}{frequency modulated continuous wave}
\newacronym{OTSM}{OTSM}{orthogonal time sequency modulation}
\newacronym{MMSE}{MMSE}{miminum mean squared error}
\newacronym{ICI}{ICI}{inter-carrier interference}
\newacronym{RMSE}{RMSE}{root-mean-squared error}
\newacronym{CPP}{CPP}{chirp-periodic prefix}
\newacronym{PDA}{PDA}{probabilistic data association}
\newacronym{LCT}{LCT}{linear canonical transform}
\newacronym{ZT}{ZT}{Zak transform}
\newacronym{DZT}{DZT}{discrete Zak transform}
\newacronym{IDZT}{IDZT}{inverse discrete Zak transform}
\newacronym{FT}{FT}{Fourier transform}
\newacronym{IFT}{IFT}{inverse Fourier transform}
\newacronym{DFT}{DFT}{discrete Fourier transform}
\newacronym{IDFT}{IDFT}{inverse discrete Fourier transform}
\newacronym{AFT}{AFT}{affine Fourier transform}
\newacronym{DAFT}{DAFT}{discrete affine Fourier transform}
\newacronym{IDAFT}{IDAFT}{inverse discrete affine Fourier transform}
\newacronym{SFFT}{SFFT}{symplectic finite Fourier transform}
\newacronym{ITN}{ITN}{intelligent traffic network}
\newacronym{ISFFT}{ISFFT}{inverse symplectic finite Fourier transform}
\newacronym{HT}{HT}{Heisenberg transform}
\newacronym{SISO}{SISO}{single-input single-output}
\newacronym{MIMO}{MIMO}{multiple-input multiple-output}
\newacronym{WT}{WT}{Wigner transform}
\newacronym{frFT}{frFT}{fractional Fourier transform}
\newacronym{IfrFT}{IfrFT}{inverse fractional Fourier transform}
\newacronym{fnT}{fnT}{Fresnel transform}
\newacronym{IfnT}{IfnT}{inverse Fresnel transform}
\newacronym{LT}{LT}{Laplace transform}
\newacronym{ILT}{ILT}{inverse Laplace transform}
\newacronym{ISAC}{ISAC}{integrated sensing and communications}
\newacronym{JCAS}{JCAS}{joint communications and sensing}
\newacronym{EM}{EM}{electromagnetic}
\newacronym{CP}{CP}{chirp-permuted}
\newacronym{B5G}{B5G}{beyond fifth generation}
\newacronym{3GPP}{3GPP}{$3^{\rm{rd}}$ generation partnership project}           
\newacronym{4G}{4G}{fourth generation}                                          
\newacronym{5G}{5G}{fifth generation}                                           
\newacronym{6G}{6G}{sixth generation}
\newacronym{CPIM}{CPIM}{chirp-permutation index modulation}
\newacronym{LTV}{LTV}{linear time-variant}
\newacronym{LTI}{LTI}{linear time-invariant}
\newacronym{LTVM}{LTVM}{linear time-variant multipath}
\newacronym{TV}{TV}{time-variant}
\newacronym{TI}{TI}{time-invariant}
\newacronym{1D}{1D}{one-dimensional}
\newacronym{2D}{2D}{two-dimensional}
\newacronym{3D}{3D}{three-dimensional}
\newacronym{NTN}{NTN}{non-terrestrial network}
\newacronym{LEO}{LEO}{low earth orbit}
\newacronym{IoT}{IoT}{Internet-of-Things}
\newacronym{mmWave}{mmWave}{millimeter-wave}
\newacronym{THz}{THz}{Terahertz}
\newacronym{V2X}{V2X}{vehicle-to-everything}
\newacronym{RCC}{RCC}{radar-communication coexistence}
\newacronym{C-V2X}{C-V2X}{Cellular-V2X}                                             
\newacronym{NR}{NR}{New Radio}                  
\newacronym{Tbps}{Tbps}{Terabits-per-second}
\newacronym{ms}{ms}{millisecond}
\newacronym{UAV}{UAV}{unmanned aerial vehicle}
\newacronym{CFO}{CFO}{carrier frequency offset}
\newacronym{ITS}{ITS}{intelligent transportation system}
\newacronym{SAGIN}{SAGIN}{space-air-ground integrated network}
\newacronym{DTN}{DTN}{digital twin network}                                    
\newacronym{ETSI}{ETSI}{European Telecommunications Standards Institute}            
\newacronym{EHF}{EHF}{extremely high-frequency}
\newacronym{BER}{BER}{bit-error-rate}
\newacronym{SotA}{SotA}{state-of-the-art}
\newacronym{DoF}{DoF}{degrees-of-freedom}
\newacronym{UE}{UE}{user equipment}
\newacronym{AP}{AP}{access point}
\newacronym{CIR}{CIR}{channel impulse response}
\newacronym{CSI}{CSI}{channel state information}
\newacronym{SNR}{SNR}{signal-to-noise ratio}
\newacronym{AWGN}{AWGN}{additive white Gaussian noise}
\newacronym{ML}{ML}{maximum likelihood}
\newacronym{OOB}{OOB}{out-of-band}
\newacronym{MX}{MX}{multiplexing}
\newacronym{DMX}{DMX}{demultiplexing}
\newacronym{AoA}{AoA}{angle-of-arrival}
\newacronym{AoD}{AoD}{angle-of-departure}
\newacronym{5GAA}{5GAA}{5G Automotive Association}                                  
\newacronym{6G-IA}{6G-IA}{6G Smart Networks and Services Industry Association}      
\newacronym{MUSIC}{MUSIC}{MUltiple SIgnal Classification}
\newacronym{ESPRIT}{ESPRIT}{estimation of signal parameters via rotational invariance techniques}
\newacronym{DP}{DP}{detection problem}
\newacronym{EP}{EP}{estimation problem}
\newacronym{KPI}{KPI}{key performance indicator}
\newacronym{ISI}{ISI}{inter-symbol interference}
\newacronym{TVIRF}{TVIRF}{time-variant impulse response function}
\newacronym{TVTF}{TVTF}{time-variant transfer function}
\newacronym{DVIRF}{DVIRF}{Doppler-variant impulse response function}
\newacronym{DVTF}{DVTF}{Doppler-variant transfer function}
\newacronym{IM}{IM}{index modulation}
\newacronym{DDSF}{DDSF}{delay-Doppler spread function}
\newacronym{PAPR}{PAPR}{peak-to-average power ratio}
\newacronym{RCS}{RCS}{radar cross-section}
\newacronym{PIR}{PIR}{peak interference residual}
\newacronym{PSR}{PSR}{peak-to-sidelobe ratio}
\newacronym{PA}{PA}{power amplifier}
\newacronym{RF}{RF}{radio frequency}
\newacronym{JCDE}{JCDE}{joint channel and data estimation}
\newacronym{AF}{AF}{ambiguity function}
\newacronym{i.i.d.}{i.i.d.}{independent and identically distributed}
\newacronym{PSLR}{PSLR}{peak-to-sidelobe ratio}
\newacronym{ISLR}{ISLR}{integrated sidelobe ratio}
\newacronym{CDF}{CDF}{cumulative distribution function}
\newacronym{CCDF}{CCDF}{complementary cumulative distribution function}
\newacronym{CP-AFDM}{CP-AFDM}{chirp-permuted affine frequency division multiplexing}
\begin{document}

\title{Chirp-Permuted AFDM: A New Degree of Freedom for Next-Generation Versatile Waveform Design}

\title{Chirp-Permuted AFDM: A Versatile Framework\\ for Next-Generation Waveform Design}

\title{Chirp-Permuted AFDM: A Versatile Waveform for Integrated Sensing and Communications}

\title{Chirp-Permuted AFDM: \\ A Versatile Waveform Design for ISAC in 6G}

\author{Hyeon Seok Rou\textsuperscript{\orcidlink{0000-0003-3483-7629}}\!,~\IEEEmembership{Member,~IEEE}, Giuseppe Thadeu Freitas de Abreu\textsuperscript{\orcidlink{0000-0002-5018-8174}}\!,~\IEEEmembership{Senior Member,~IEEE}.
\thanks{H.~S.~Rou and G.~T.~F.~de~Abreu are with the School of Computer Science and Engineering, Constructor University, Campus Ring 1, 28759 Bremen, Germany ([hrou, gabreu]@constructor.university).}
\vspace{-2ex}}

\markboth{Submitted to the IEEE transactions on wireless communications}%
{H. S. Rou \MakeLowercase{\textit{et al.}}}


\maketitle

\begin{abstract}
We present a novel multicarrier waveform, termed \ac{CP-AFDM}, which introduces a unique chirp-permutation domain on top of the chirp subcarriers of the conventional \acs{AFDM}. 
Rigorous analysis of the signal model and waveform properties, supported by numerical simulations, demonstrates that the proposed \ac{CP-AFDM} waveform preserves all core characteristics of \ac{AFDM} -- including robustness to doubly-dispersive channels, \ac{PAPR}, and full delay-Doppler representation -- while further enhancing ambiguity function resolution and \ac{PSLR} in the Doppler domain.
These improvements establish \ac{CP-AFDM} as a highly attractive candidate for emerging \ac{6G} use cases demanding both reliability and sensing-awareness.
Moreover, by exploiting the vast degree of freedom in the chirp-permutation domain, two exemplary multifunctional applications are introduced: an \ac{IM} technique over the permutation domain which achieves significant spectral efficiency gains, and a physical-layer security scheme that ensures practically perfect security through permutation-based keying, without requiring additional transmit energy or signaling overhead.
\end{abstract}

\begin{IEEEkeywords}
\Acf{AFDM}, doubly-dispersive channel, chirp permutation, waveform design, multi-functional, \acs{ISAC}, \acf{6G}.
\end{IEEEkeywords}

\glsresetall

\IEEEpeerreviewmaketitle

\section{Introduction}
\label{sec:introduction}

The evolution toward \ac{6G} systems, to be adopted by standardization by 2030, is envisioned to unlock unprecedented capabilities in wireless communication technology~\cite{Wang_6G,tataria20216g}, with key performance criteria including ultra-high data rate, ultra-low latency, and massive connectivity, extending orders of magnitude beyond what is outlined in the current \ac{5G}~\cite{agiwal2016next,samdanis2020road,dogra2020survey}.
These benchmarks are outlined to support emerging scenarios defined by the IMT-2030 vision, encompassing \ac{ISAC}, seamless connectivity, immersive experiences, in addition to a broad range of new applications including \ac{IoT}, \ac{V2X}, \ac{SAGIN}, \ac{DTN}, and more \cite{saad2019vision,Chowdhury_6G}.

Achieving such demanding requirements necessitate pushing the wireless operation into higher spectral bands, namely, \ac{mmWave} and \ac{THz} frequencies, where sensitivity of the channel to phase-shift effects becomes particularly severe.
In addition, a great amount of the applications and scenarios inherently include high-mobility and heterogeneous environments, further subjecting the wireless channel to severe Doppler shifts.
These consolidate the need for considering doubly-dispersive wireless channels for \ac{6G}, characterized by both time- and frequency-selectivity~\cite{Bliss_DDchannel}, which lead to considerable challenges for traditional waveform designs, such as \ac{OFDM} which suffers from pronounced \ac{ISI} and \ac{ICI} under high-mobility and multipath due to degradation of subcarrier orthogonality~\cite{wang2006performance,armada2001understanding}. 
Concurrently, one of the key considerations of \ac{6G} is the integration of sensing capabilities into its native operation under the \ac{ISAC} framework, capable of supporting high-resolution parameter estimation and reliable data transmission simultaneously~\cite{liu2022integrated,zhang2021overview}. 
In this aspect as well, traditional waveform architectures, namely the \ac{OFDM}, struggle to meet such dual-functionality requirements due to their limited Doppler resolution and interference robustness.

In light of these challenges, waveform design for \ac{6G} has emerged as a critical research frontier~\cite{liyanaarachchi2021optimized,zhou2022integrated}, with recent efforts exploring alternative signal representations that remain resilient under dispersive channel conditions.
\Ac{OTFS}, for example, modulates information over a two-dimensional delay-Doppler grid via a cascade of the \ac{ISFFT} and a Heisenberg transform, or equivalently through the \ac{IDZT}, effectively converting rapidly time-varying multipath channels into approximately invariant delay-Doppler-domain channels~\cite{hadani2018otfs, wei2021orthogonal,gopalam2024zak}.
This structural transformation provides robust orthogonality over doubly-dispersive channels and hence improved communications performance in high-Doppler environments, and enhanced parameter estimation for mobility-aware applications. 
Consequently, \ac{OTFS} has been successfully extended to support advanced \ac{MIMO} and \ac{ISAC} frameworks~\cite{ramachandran2018mimo,gaudio2020effectiveness,ranasinghe2024fast}.

Another promising candidate is \ac{AFDM} \cite{Bemani_AFDM21,Bemani_AFDM23}, which generalizes the conventional \ac{FT} using the \ac{AFT} parameterized by two tunable chirp rates. 
This formulation unifies and extends existing waveforms, subsuming both \ac{OFDM} and \ac{OCDM} \cite{ouyang2016orthogonal} as special cases, while enabling the modulation kernel to be tailored to the delay and Doppler characteristics of the channel \cite{Rou_SPM24,Bemani_AFDM23}. 
As a result, \ac{AFDM} ensures full diversity over doubly-dispersive environments and achieves full delay-Doppler orthogonality and representation, thereby also motivating several successful extensions in various contexts~\cite{yin2024diagonally,RanasingheWCNC,zhu2023design,luo2024afdm,ranasinghe2025affine}.
In addition, \ac{AFDM} has demonstrated great advantages for practical deployment in terms of standardization potential, thanks to its superior backward compatibility with the current system, i.e., \ac{OFDM}, hardware efficiency, and flexible parameterization \cite{rou2025affine,boudjelal2025redefining,10693842,savaux2024special}.

\newpage

Building upon the advantages and design principles of the \ac{AFDM} waveform, this article introduces \ac{CP-AFDM}, a novel waveform that extends \ac{AFDM} by incorporating controlled permutations in the chirp domain.
It should be noted that while \ac{CP-AFDM} has been previously introduced in preliminary form for selected application scenarios, such as chirp-permutation \ac{IM} \cite{rou2024afdm} and physical layer security \cite{rou2025chirp}, it has not yet been formally proposed, nor fundamentally analyzed in terms of its underlying modulator structure and waveform properties and delay-Doppler characteristics.

The proposed \ac{CP-AFDM} is shown to retain all key benefits of \ac{AFDM}, such as robustness in doubly-dispersive channels, while further enhancing Doppler resolution and improving the \ac{PSLR} of the ambiguity function, an essential property for high-mobility sensing and communications.
Moreover, the chirp-permutation domain introduces a novel and powerful degree of freedom in waveform design, enabling integrated multifunctionality and offering flexibility across application-specific requirements.

In light of the above, we present in this article the following contributions:
\begin{itemize}
\item The \ac{CP-AFDM} waveform is formally proposed, consolidating a rigorous formulation of the chirp-permutation domain and the modulator structure based on the \ac{CP}-\ac{DAFT}.
\item A range of theoretical and numerical analyses of \ac{CP-AFDM} is provided, illustrating that the waveform preserves the fundamental advantages of \ac{AFDM}, while additionally enabling enhanced Doppler resolution and a trade-off between \ac{PSLR} and \ac{ISLR} in the ambiguity function.
\item The new degree of freedom introduced via chirp permutation is highlighted as a foundation for waveform-level multifunctionality, enabling application scenarios such as chirp-permutation-domain \ac{IM} and physical-layer secure communication schemes.
\end{itemize}

The remainder of the article is organized as follows:
Section~\ref{sec:system_model} reviews the doubly-dispersive channel model and outlines the fundamentals of conventional \ac{AFDM};
Section~\ref{sec:proposed_CP-AFDM_waveform} presents the formal definition of \ac{CP-AFDM}, including its modulator structure and signal model;
Section~\ref{sec:analysis} provides a comprehensive analysis of \ac{CP-AFDM}, supported by theoretical derivations and empirical results, including \ac{BER} performance and ambiguity function characteristics;
Section~\ref{sec:application} explores application scenarios that benefit from the chirp-permutation domain, demonstrating how \ac{CP-AFDM} enables integrated multifunctionality;
and finally, Section~\ref{sec:conclusion} concludes the article with a summary and outlook on future research directions for this newly proposed waveform.

\section{System Model and Fundamentals}
\label{sec:system_model}

In this section, we provide the foundation for the proposed \ac{CP-AFDM} waveform design to be presented in Section~\ref{sec:proposed_CP-AFDM_waveform}, by briefly outlining the doubly-dispersive wireless channel model and the signal model of the conventional \ac{AFDM} waveform.

\subsection{Doubly-Dispersive Channel}
\label{subsec:dd_channel}

We consider a general wireless communication scenario in which the channel between a transmitter and receiver consists of $P$ significant multipath components. 
Each $p$-th path, indexed by $p \in \{1,\cdots,P\}$, is characterized by a complex fading coefficient $h_p \in \mathbb{C}$, a delay $\tau_p \in [0, \tau^\mathrm{max}]$, and a Doppler shift $\nu_p \in [-\nu^\mathrm{max}, +\nu^\mathrm{max}]$, where $\tau^\mathrm{max}$ and $\nu^\mathrm{max}$ denote the maximum path delay and Doppler spread of the channel, respectively.

The above leads to a classical \ac{LTV} input-output model, commonly described using the \ac{TVIRF} given by
\begin{equation}
h(t, \tau) \triangleq \sum_{p=1}^{P} h_p \cdot e^{j2\pi \nu_p t} \cdot \delta(\tau - \tau_p),
\label{eq:TVIR}
\end{equation}
where $j \triangleq \sqrt{-1}$, $t$ is the continuous time index, $\tau$ is the path delay, and $\delta(\cdot)$ denotes the Dirac delta function.

The model of eq. \eqref{eq:TVIR} assumes a finite number of delay and Doppler taps, which is a well-established approximation of practical doubly-dispersive wireless channels, particularly in underspread conditions~\cite{Bliss_DDchannel,Rou_SPM24}, which implies that the maximum delay spread $\tau_\mathrm{max} - \tau_\mathrm{min}$ is smaller than $T$, the maximum Doppler spread $\nu_\mathrm{max} - \nu_\mathrm{min}$ is smaller than $\frac{1}{T}$, and $\tau_\mathrm{max} \nu_\mathrm{max} <\!< 1$, where $T$ is the finite signal period in seconds.

Then, given a baseband transmit signal $s(t)$ of bandwidth $B$, the received signal under this channel model is typically expressed as a convolution over the delay axis. 
By sampling the continuous signal at a sufficient sampling frequency of $f_\mathrm{s}$, the discrete input-output relation is given by
\begin{equation}
r[n] = {\sum_{\ell = 0}^{\infty}}  s[n - \ell] \left( \sum_{p = 1}^{P} h_p \cdot e^{j2\pi f_p \frac{n}{N}} \cdot {\delta \big[ \ell - \ell_p \big]} \right) + w[n],
\label{eq:IO_discrete}
\end{equation}
where $r[n]$, $s[n]$, and $w[n]$ denote the sampled discrete sequences of the transmit, receive, and noise signals, respectively; $f_p \triangleq \frac{N \nu_p}{f_\mathrm{s}} \in [-\frac{N\nu_\mathrm{max}}{f_\mathrm{s}}, \frac{N\nu_\mathrm{max}}{f_\mathrm{s}}]$ is the normalized digital Doppler shift of the $p$-th path; $\ell_p \triangleq \frac{\tau_p}{T_\mathrm{s}} \in \{0,\cdots\!,\ell_{\mathrm{max}}\}$ is the normalized integer delay\footnote{It is assumed that each path delay is accurately quantized to an integer multiple of the sampling period $T_\mathrm{s}$, i.e., $\ell_p = \tau_p / T_\mathrm{s}$ with negligible rounding error, as commonly assumed in the literature.} of the $p$-th path; and $\delta[\,\cdot\,]$ is the discrete unit impulse function.

Following the above, modern multicarrier systems appended a prefix sequence to mitigate \ac{ISI} from time dispersion, which is defined over $N_\mathrm{cp}$ samples, where $N_\mathrm{cp} \geq \ell_\mathrm{max}$, such that
\begin{equation} 
s[n'] = s[N + n'] \cdot e^{j2\pi \cdot \phi_\mathrm{cp}(n')},
\label{eq:cyclic_prefix}
\end{equation}
with $n' \in \{-1,\cdots,-N_\mathrm{cp}\}$, and $\phi_\mathrm{cp}(n')$ denoting a waveform-dependent phase term (e.g., $\phi_\mathrm{cp}(n') = 0$ for \ac{OFDM} and \ac{OTFS}, or chirp-periodic for \ac{AFDM}).

After discarding the prefix at the receiver, the convolution becomes circular and can be expressed in matrix form as
\begin{align}
\mathbf{r} \triangleq \mathbf{H} \!\cdot\! \mathbf{s} & = \left( \sum_{p=1}^{P} \underbrace{h_p \!\cdot\! \mathbf{\Phi}_{p} \!\cdot\! \mathbf{W}^{f_p} \!\cdot\! \mathbf{L}^{\ell_p}}_{\triangleq \mathbf{H}_p \in \mathbb{C}^{N \times N}} \right)\!\cdot\! \mathbf{s} + \mathbf{w},
\label{eq:IO_matrix}
\end{align}
where $\mathbf{r} \!\in\! \mathbb{C}^{N \times 1}$, $\mathbf{s} \!\in\! \mathbb{C}^{N \times 1}$, and $\mathbf{w} \!\in\! \mathbb{C}^{N \times 1}$ are respectively the vectors representing the received signal, the transmit signal, and \ac{AWGN}; and the circulant channel matrix $\mathbf{H}$ is composed of $P$ components $\mathbf{H}_p$, each parameterized by three deterministic matrices:
\begin{itemize}
\item $\mathbf{\Phi}_p \in \mathbb{C}^{N \times N}$, a diagonal matrix of prefix-dependent phase offsets $\phi_\mathrm{cp}(n)$, given by eq.~\eqref{eq:CP_phase_matrix},
\item $\mathbf{W} \in \mathbb{C}^{N \times N}$, a diagonal matrix of the $N$-th roots of unity, given by eq.~\eqref{eq:W_matrix},
\item $\mathbf{L} \in \mathbb{C}^{N \times N}$, a forward cyclic shift matrix, given by \vspace{-1ex}
\end{itemize}

\begin{figure*}[t]
\begin{equation}
\mathbf{\Phi}_{p} \triangleq \mathrm{diag}\Big(\big[\overbrace{e^{-j2\pi\cdot \phi_\mathrm{cp}(\ell_p)}, e^{-j2\pi\cdot \phi_\mathrm{cp}(\ell_p - 1)}, \cdots, e^{-j2\pi\cdot \phi_\mathrm{cp}(2)}, e^{-j2\pi\cdot \phi_\mathrm{cp}(1)}}^{\ell_p \;\text{terms}}, \overbrace{\;\!1\;\!, 1\;\!, \cdots\!\vphantom{e^{(x)}}, 1\;\!, 1}^{N - \ell_p\;\! \text{ones}}\big]\Big) \in \mathbb{C}^{N \times N}\!.
\label{eq:CP_phase_matrix}
\end{equation}
\begin{equation}
\mathbf{W} \triangleq \mathrm{diag}\Big(\big[1, e^{-j2\pi/N}, \cdots, e^{-j2\pi(N-2)/N}, e^{-j2\pi(N-1)/N}\big]\Big) \in \mathbb{C}^{N \times N}.
\label{eq:W_matrix}
\end{equation}
\hrule
\vspace{-1ex}
\end{figure*}

\begin{equation}
\label{eq:forwardcyclic _matrix}
\mathbf{L} =
\begin{bmatrix}
0      & 0      & \cdots & 0      & 1      \\
1      & 0      & \cdots & 0      & 0      \\
0      & 1      & \ddots & \vdots & \vdots \\
\vdots & \ddots & \ddots & 0      & 0      \\
0      & \cdots & 0      & 1      & 0
\end{bmatrix} \in \mathbb{C}^{N \times N},
\end{equation}
such that right-multiplying by $\mathbf{L}^{\ell_p}$ corresponds to a cyclic left-shift of a matrix by $\ell_p \in \mathbb{N}_0$ elements\footnote{A generalization of the delay-dependent cyclic shift matrix to support non-integer delays has been proposed in the literature~\cite{wu2023dft}. 
However, as previously assumed, we focus on the integer-delay case, well-justified for \ac{B5G} scenarios operating at higher carrier frequencies and wider bandwidths, where the resulting high sampling rates yield the normalized delay indices with negligible modeling error.}.

In light of the above, the convolutional matrix structure implies that Doppler shifts are not directly resolvable unless an additional domain transform is applied, as different paths with identical delays yet different Dopplers, will completely overlap in the diagonal position.
Therefore, a key objective of advanced waveform design is to achieve delay-Doppler orthogonality in channel structure, via appropriate transformations.

\subsection{The Conventional AFDM Waveform}
\label{subsec:AFDMwaveform}

The conventional \ac{AFDM} waveform \cite{Bemani_AFDM21,Bemani_AFDM23} is based on the \acf{DAFT} to map the transmit signal into a time-frequency domain representation, whose $N$-point forward transform matrix is defined as
\begin{equation}
\mathbf{A} \triangleq \mathbf{\Lambda}_{c_2} \mathbf{F}_N \mathbf{\Lambda}_{c_1} \in \mathbb{C}^{N \times N},
\end{equation}
where $ \mathbf{F}_N \in \mathbb{C}^{N \times N}$ is the $N$-point \ac{DFT} matrix, and $\mathbf{\Lambda}_{c_i} \triangleq \mathrm{diag}[e^{-j2\pi c_i (0)^2}, \cdots, e^{-j2\pi c_i (N-1)^2}] \in \mathbb{C}^{N \times N}$ is a diagonal chirp matrix defined by a central digital frequency $c_i$, with two parametrizable chirp parameters $c_1$ and $c_2$.

Given the above, the inverse transform, i.e., \ac{IDAFT} is utilized to modulate the data vector $\mathbf{x} \in \mathbb{C}^{N \times 1}$ unto the chirp-domain subcarriers, as
\begin{equation}
\mathbf{s}^{\mathrm{AFDM}} \triangleq \mathbf{A}^{\!-1} \cdot \mathbf{x} = \overbrace{(\mathbf{\Lambda}_{c_1}\herm \cdot \mathbf{F}_N\herm \cdot \mathbf{\Lambda}_{c_2}\herm )}^{\text{IDAFT}} \cdot \; \mathbf{x} \in \mathbb{C}^{N \times 1}.
\label{eq:tx_afdm}
\end{equation} 

Then, the received \ac{AFDM} signal $\mathbf{r}^\mathrm{AFDM} \in \mathbb{C}^{N \times 1}$ after propagation through the doubly-dispersive channel in eq.~\eqref{eq:IO_matrix} is demodulated by the matched \ac{DAFT} matrix $\mathbf{A} \in \mathbb{C}^{N\times N}$ of the same parametrization on $c_1$ and $c_2$, yielding the demodulated signal vector $\mathbf{y}^{\mathrm{AFDM}} \in \mathbb{C}^{N \times 1}$ as
\begin{align}
\mathbf{y}^{\mathrm{AFDM}}&  = \mathbf{A} \cdot \overbrace{(\mathbf{H} \cdot \mathbf{s}^{\mathrm{AFDM}} + {\mathbf{w}})}^{\triangleq \mathbf{r}^\mathrm{AFDM} \; \in \; \mathbb{C}^{N \times 1} } \nonumber \\
& = \mathbf{G}^{\mathrm{AFDM}}\cdot\mathbf{x} +  \mathbf{A}\mathbf{w} \in \mathbb{C}^{N \times 1}.
\label{eq:rx_afdm}
\end{align} 

The effective \ac{AFDM} channel of eq.~\eqref{eq:rx_afdm} is defined by
\begin{align}
\mathbf{G}^{\mathrm{AFDM}} \in \mathbb{C}^{N \times N}&  \label{eq:eff_afdm}\\[-0.5ex] 
& \hspace{-16ex} \triangleq \sum_{p=1}^{P}  h_p \!\cdot\! \big(\mathbf{\Lambda}_{c_2} \!\cdot\! \mathbf{F}_N \!\cdot\! \mathbf{\Lambda}_{c_1} \big)  \Big (\mathbf{\Phi}_{p} \!\cdot\! \mathbf{W}^{f_p} \!\cdot\! \mathbf{\Pi}^{\ell_p}  \Big) \big(\mathbf{\Lambda}_{c_1}\herm \!\cdot\! \mathbf{F}_N\herm \!\cdot\! \mathbf{\Lambda}_{c_2}\herm \big), \nonumber
\end{align} 
where the prefix phase offset function of \ac{AFDM} in $\mathbf{\Phi}_{p}$ is given by $\phi^\mathrm{AFDM}_{\mathrm{cp}}(n) = c_1(N^2 + 2Nn)$, incorporating a \ac{CPP} as required by \ac{AFDM} systems\footnote{For even values of $N$ and integer values of $2Nc_1$, the phase offset matrix of \ac{AFDM} reduces to the identity matrix $\mathbf{I}_N$.}.

Given the above, the \ac{AFDM} effective channel achieves full orthogonality over integer delay-Doppler indices when the channel satisfies the orthogonality condition given by
\begin{equation}
2\big(f^\mathrm{max} + \xi\big)(\ell^\mathrm{max} + 1) + \ell^\mathrm{max} \leq N,
\label{eq:AFDM_orthoCondition}
\end{equation}
where $f^\mathrm{max}$ and $\ell^\mathrm{max}$ are the maximum unambiguous normalized digital Doppler shift and normalized integer delay of the doubly-dispersive channel, respectively, and the parameter $\xi \in \mathbb{N}_0$ denotes the so-called \textit{guard width}, indicating the number of additional elements reserved to account for Doppler interference leakage \cite{Rou_SPM24,Bemani_AFDM23}.

Provided that the orthogonality condition is satisfied, to ensure full diversity, the optimal chirp frequency $c_1$ for the first chirp must be set to $c_1 = \frac{2(f^\mathrm{max} + \xi) + 1}{2N}$, while the second chirp frequency $c_2$ may be chosen as an arbitrary irrational number, i.e., $c_2 \in \{ \mathbb{R} \setminus \mathbb{Q} \}$, or a sufficiently small real value such that $c_2 \ll \frac{1}{2N}$.

In other words, to achieve optimal communications performance over doubly-dispserisve channels, while there is a strict requirement on the first chirp sequence and $c_1$, also referred to as the post-chirp domain, there is no strict requirement for $c_2$ beyond avoiding resonance with the Doppler harmonics of the doubly-dispersive channel.
Indeed, leveraging this inherent flexibility in $c_2$ - not available in any other waveforms - a plethora of \ac{AFDM} extensions have been proposed, including \ac{PAPR} reduction techniques \cite{yuan2024papr,ali2025spreading}, \ac{IM} schemes \cite{tao2025affine,zhu2023design,liu2024pre}, and ambiguity function tuning \cite{yin2025ambiguity,bedeer2025ambiguity}.

In the following section, we further extend this degree of freedom in a novel direction, leveraging a chirp-permutation operation, leading to the proposed \ac{CP}-\ac{AFDM} waveform.

\section{Proposed Chirp-Permuted AFDM (CP-AFDM)}
\label{sec:proposed_CP-AFDM_waveform}

\subsection{Permutation Operation}
\label{subsec:permutation}

A permutation operation can be described in terms of a bijective mapping from the set of $N$ indices $\{1, 2, \ldots, N\}$ unto itself, which can be denoted by the index mapping function $\Pi_i(\,\cdot\,) : \{1,\ldots,N\} \rightarrow \{1,\ldots,N\}$, where $i \in \{1,\ldots,N!\}$ denotes specific permutation order from the total of $N!$ possible permutations. 
Given an arbitrary column vector $\mathbf{x} \in \mathbb{C}^{N \times 1}$, the $i$-th permutation of its elements yielding the permuted sequence $\mathbf{x}_i \in \mathbb{C}^{N \times 1}$ is described as
\begin{equation}
\mathbf{x}_i \triangleq 
\begin{bmatrix}
x_{\Pi_i(1)} \\
\vdots \\
x_{\Pi_i(n)} \\
\vdots \\
x_{\Pi_i(N)}\\[0.5ex]
\end{bmatrix}
= \mathbf{\Pi}_i \cdot\!
\begin{bmatrix}
x_1 \\
\vdots \\
x_n \\
\vdots \\
x_N\\[0.5ex]
\end{bmatrix}
= \mathbf{\Pi}_i \mathbf{x} \in \mathbb{C}^{N \times 1},
\label{eq:permutation}
\end{equation}
where $[\mathbf{x}_i]_n \triangleq x_{\Pi_i(n)}$ is the new element at position $n$ in the permuted sequence.

\subsection{Chirp-Permuted Discrete Affine Fourier Transform}
\label{subsec:chirp-permuted_AFT}

In light of the above, let us define the \ac{CP}-\ac{DAFT} as the \ac{DAFT} with a permuted second chirp sequence.

The general $N$-point \ac{CP}-\ac{DAFT} is defined as
\begin{align}
\mathbf{A}_{i_1,i_2} &\triangleq \mathrm{diag}\big(\mathbf{\Pi}_{i_2} \!\cdot\! \boldsymbol{\lambda}_{c_2}\big) \cdot \mathbf{F}_N \cdot \mathrm{diag}\big(\mathbf{\Pi}_{i_1} \!\cdot\! \boldsymbol{\lambda}_{c_1}\big) \nonumber \\
& =  \mathbf{\Lambda}_{c_2,i_2} \mathbf{F}_N \mathbf{\Lambda}_{c_1,i_1} \in \mathbb{C}^{N \times N},
\label{eq:cpdaft_matrix}
\end{align}
in other words, the \ac{CP}-\ac{DAFT} is obtained by permuting the chirp sequences $\boldsymbol{\lambda}_{c_1},\boldsymbol{\lambda}_{c_2}$ in the diagonal chirp matrices $\mathbf{\Lambda}_{c_1},\mathbf{\Lambda}_{c_2}$, and $i_1,i_2 \in \{1,\ldots,N!\}$ are the permutation orders of the first and second chirp sequences, respectively.

Now, it should be said that this therefore is no longer representative of the original continuous domain \ac{DAFT} nor a linear canonical transform.
Therefore, the proposed \ac{CP}-\ac{DAFT} can be considered a novel type of purely discrete transforms - in likeness to the Hadamard transform \cite{knagenhjelm2002hadamard} and the discrete cosine transform \cite{ahmed2006discrete}.

Furthermore, we show that the \ac{CP}-\ac{DAFT} is a unitary transform even under arbitrary chirp-permutations, and therefore consequently retain the useful properties such as basis orthonormality, invertibility, and energy conservation, and will be leveraged in the design of a new waveform modulation.

\begin{lemma}[Unitarity of diagonal chirp matrices]
\label{lem:chirp_unitary}
Let $\boldsymbol{\lambda} = [e^{-j2\pi c (0)^2}\!, \ldots\!, e^{-j2\pi c (N-1)^2}]$ be a quadratic chirp sequence with central chirp frequency $c \in \mathbb{R}$, and define the diagonal matrix $\mathbf{\Lambda} = \mathrm{diag}(\boldsymbol{\lambda}) \in \mathbb{C}^{N \times N}$. Then $\mathbf{\Lambda}$ is unitary, i.e.,
\begin{equation}
\mathbf{\Lambda}\herm  \cdot\mathbf{\Lambda} = \mathbf{I}_N.
\end{equation}

\begin{proof}
By definition of the product of two diagonal matrices, we have $\left[ \mathbf{\Lambda}\herm \mathbf{\Lambda} \right]_{n,n'} = 0 ~\, \forall n \neq n'$, i.e., the off-diagonal elements are zero, while the diagonal elements are given by
\begin{equation*}
\left[ \mathbf{\Lambda}\herm \mathbf{\Lambda} \right]_{n,n} = \big(e^{-j2\pi c (n)^2}\big)^{\!*} \!\cdot \big(e^{-j2\pi c (n)^2}\big) = \big| e^{-j2\pi c (n)^2}\big| = 1,
\end{equation*}
for all $n \in \{0,\ldots,N-1\}$.
\end{proof}

\end{lemma}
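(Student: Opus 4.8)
The plan is to exploit the single structural fact that $\mathbf{\Lambda}$ is diagonal, which collapses the matrix identity into a family of scalar identities about moduli. First I would record that the Hermitian transpose of a diagonal matrix is again diagonal, with each entry replaced by its complex conjugate, so that $\mathbf{\Lambda}\herm = \mathrm{diag}\big([e^{+j2\pi c (0)^2}, \ldots, e^{+j2\pi c (N-1)^2}]\big)$. This is the one and only place where the hypothesis $c \in \mathbb{R}$ is actually used, and I would flag it explicitly: if $c$ were complex the chirp entries would no longer lie on the unit circle and unitarity would fail, so the real-valuedness of the central chirp frequency is not cosmetic.

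Next I would invoke the elementary fact that the product of two diagonal matrices is diagonal, with $(n,n')$ entry equal to the product of the respective $(n,n)$ and $(n',n')$ entries; the off-diagonal entries therefore vanish automatically, giving $[\mathbf{\Lambda}\herm\mathbf{\Lambda}]_{n,n'} = 0$ for $n \neq n'$. For the diagonal entries, $[\mathbf{\Lambda}\herm\mathbf{\Lambda}]_{n,n} = e^{+j2\pi c n^2}\cdot e^{-j2\pi c n^2} = \big|e^{-j2\pi c n^2}\big|^2 = 1$ for every $n \in \{0,\ldots,N-1\}$, since $c n^2 \in \mathbb{R}$. Collecting the entrywise identities yields $\mathbf{\Lambda}\herm\mathbf{\Lambda} = \mathbf{I}_N$, which is the claim.

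I do not expect any genuine obstacle here — the statement is a one-line computation — so the only real task is to present the diagonal-product bookkeeping cleanly. If desired, I would add two short remarks that make the lemma more useful downstream: (i) the same argument gives $\mathbf{\Lambda}\mathbf{\Lambda}\herm = \mathbf{I}_N$, hence $\mathbf{\Lambda}\herm = \mathbf{\Lambda}^{-1}$; and (ii) the conclusion is unaffected by any reordering of the diagonal entries, since a permutation of a unit-modulus list is still a unit-modulus list, so the permuted chirp matrices $\mathbf{\Lambda}_{c_1,i_1} = \mathrm{diag}(\mathbf{\Pi}_{i_1}\boldsymbol{\lambda}_{c_1})$ and $\mathbf{\Lambda}_{c_2,i_2} = \mathrm{diag}(\mathbf{\Pi}_{i_2}\boldsymbol{\lambda}_{c_2})$ appearing in eq.~\eqref{eq:cpdaft_matrix} are themselves unitary — the observation that will underpin the subsequent unitarity claim for the full \ac{CP}-\ac{DAFT}.
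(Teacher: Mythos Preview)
Your proposal is correct and follows essentially the same argument as the paper: both observe that the product of diagonal matrices is diagonal, so the off-diagonal entries vanish, and then verify that each diagonal entry equals the squared modulus of a unit-modulus complex exponential. Your additional remarks on the role of $c \in \mathbb{R}$ and on permutation-invariance are helpful context but go slightly beyond the paper's proof, which is terser; the permutation observation in particular is exactly what the paper states separately as Corollary~\ref{cor:permuted_unitary}.
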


\begin{corollary}[Unitarity of permuted chirp matrices]
\label{cor:permuted_unitary}
Let $\Pi_i(\cdot)$ be a permutation operator with given order $i$ as defined in Section \ref{subsec:permutation}, such that the diagonal permuted-chirp matrix is defined as $\mathbf{\Lambda}_i = \mathrm{diag}(\mathbf{\Pi}_i\boldsymbol{\lambda}) = \mathrm{diag}\big([e^{-j2\pi c \cdot \Pi_i(0)^2}, \ldots\!, e^{-j2\pi c \cdot\Pi_i(N-1)^2}]\big) \in \mathbb{C}^{N \times N}$.
Then, $\mathbf{\Lambda}_{i}$ is unitary for all $i \in \{1,\ldots,N!\}$, i.e.,
\begin{equation}
\mathbf{\Lambda}_i\herm \cdot \mathbf{\Lambda}_i^{\vphantom{H}} = \mathbf{I}_N, ~ \forall i \in \{1,\ldots,N!\}
\end{equation}
from Lemma \ref{lem:chirp_unitary} and $\big(e^{-j2\pi c \cdot \Pi_i(n)^2}\big)^{\!*} \big(e^{-j2\pi c \cdot \Pi_i(n)^2}\big) = 1$.
\end{corollary}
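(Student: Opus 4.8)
The plan is to reduce the claim to Lemma~\ref{lem:chirp_unitary} by using the elementary fact that unitarity of a diagonal matrix depends only on the moduli of its diagonal entries, together with the observation that a permutation merely reorders those entries without altering their moduli.

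First I would note that $\mathbf{\Lambda}_i = \mathrm{diag}(\mathbf{\Pi}_i \boldsymbol{\lambda})$ is still diagonal, since by the definition in eq.~\eqref{eq:permutation} the vector $\mathbf{\Pi}_i \boldsymbol{\lambda}$ is simply a reordering of the $N$ entries of $\boldsymbol{\lambda}$. Hence, exactly as in the proof of Lemma~\ref{lem:chirp_unitary}, the product $\mathbf{\Lambda}_i\herm \mathbf{\Lambda}_i$ is diagonal, so $[\mathbf{\Lambda}_i\herm \mathbf{\Lambda}_i]_{n,n'} = 0$ for all $n \neq n'$. Next I would evaluate the diagonal entries: because $\Pi_i(\cdot)$ is a bijection of the index set onto itself, each permuted index $\Pi_i(n)$ is a real integer, so $e^{-j2\pi c \cdot \Pi_i(n)^2}$ is a complex exponential of unit modulus, giving $[\mathbf{\Lambda}_i\herm \mathbf{\Lambda}_i]_{n,n} = (e^{-j2\pi c \cdot \Pi_i(n)^2})^* (e^{-j2\pi c \cdot \Pi_i(n)^2}) = 1$ for every $n$. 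Combining the two observations yields $\mathbf{\Lambda}_i\herm \mathbf{\Lambda}_i = \mathbf{I}_N$, and since $i$ was arbitrary the conclusion holds for all $i \in \{1,\ldots,N!\}$.

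An even shorter route, which I would mention as an alternative, is purely structural: conjugation of a diagonal matrix by a permutation matrix satisfies $\mathbf{\Lambda}_i = \mathbf{\Pi}_i \mathbf{\Lambda} \mathbf{\Pi}_i\herm$, and since permutation matrices are unitary and $\mathbf{\Lambda}$ is unitary by Lemma~\ref{lem:chirp_unitary}, the product $\mathbf{\Lambda}_i = \mathbf{\Pi}_i \mathbf{\Lambda} \mathbf{\Pi}_i\herm$ of three unitary matrices is itself unitary. I do not expect any real obstacle in either approach; the only point that merits a word of caution is that the reordered sequence $\mathbf{\Pi}_i \boldsymbol{\lambda}$ is no longer literally a quadratic chirp sequence in the natural-order form assumed by Lemma~\ref{lem:chirp_unitary} (its exponents are $\Pi_i(n)^2$ rather than $n^2$ in order), so one cannot cite that lemma verbatim — but its proof invokes only the modulus-one property of each entry, which is manifestly permutation-invariant, so the argument transfers without change.
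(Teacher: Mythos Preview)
Your proposal is correct and matches the paper's approach: the corollary in the paper is justified in-line by exactly the computation you give, namely that the product of diagonal matrices is diagonal and each diagonal entry satisfies $\big(e^{-j2\pi c \cdot \Pi_i(n)^2}\big)^{\!*}\big(e^{-j2\pi c \cdot \Pi_i(n)^2}\big)=1$. Your alternative conjugation argument $\mathbf{\Lambda}_i=\mathbf{\Pi}_i\mathbf{\Lambda}\mathbf{\Pi}_i\herm$ is also valid and is a pleasant structural rephrasing, but it is not materially different from the paper's route.
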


%
%
\begin{theorem}[Unitarity of the CP-DAFT Matrix]
\label{thm:cpdaft_unitary}
Let $\mathbf{F}_N$ denote the $N$-point \ac{DFT} matrix, and let $\mathbf{\Lambda}_{c_1,i_2} = \mathrm{diag}(\mathbf{\Pi}_{i_1}\boldsymbol{\lambda}_{c_1})$ and $\mathbf{\Lambda}_{c_2, i_2} = \mathrm{diag}(\mathbf{\Pi}_{i_2} \boldsymbol{\lambda}_{c_2})$ denote diagonal matrices whose entries are the permuted chirp sequences, respectively with central frequencies $c_1$ and $c_2$ and permutation orders $i_1$ and $i_2$.
Then, the \ac{CP}-\ac{DAFT} matrix defined as
\begin{equation}
\mathbf{A}_{i_1,i_2} \triangleq \mathbf{\Lambda}_{c_2, i_2} \cdot \mathbf{F}_N \cdot \mathbf{\Lambda}_{c_1,i_2}
\end{equation}
is unitary, i.e.,
\begin{equation}
\mathbf{A}_{i_1,i_2}\herm \cdot \mathbf{A}_{i_1,i_2}  = \mathbf{I}_N,
\end{equation}
for all $i_1,i_2 \in \{1,\ldots,N!\}$.
\end{theorem}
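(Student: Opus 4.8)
The plan is to reduce the unitarity of the composite matrix $\mathbf{A}_{i_1,i_2}$ to the unitarity of its three factors, each of which is either already known to be unitary (the DFT matrix $\mathbf{F}_N$, which satisfies $\mathbf{F}_N\herm \mathbf{F}_N = \mathbf{I}_N$ by standard properties) or has just been established to be unitary in the preceding results (the permuted chirp matrices $\mathbf{\Lambda}_{c_1,i_1}$ and $\mathbf{\Lambda}_{c_2,i_2}$, by Corollary~\ref{cor:permuted_unitary}). The key algebraic fact is that a product of unitary matrices is unitary, so the proof is essentially a one-line computation once the conjugate-transpose is distributed correctly over the product.

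Concretely, I would proceed as follows. First, write out $\mathbf{A}_{i_1,i_2}\herm = (\mathbf{\Lambda}_{c_2,i_2} \mathbf{F}_N \mathbf{\Lambda}_{c_1,i_1})\herm = \mathbf{\Lambda}_{c_1,i_1}\herm \mathbf{F}_N\herm \mathbf{\Lambda}_{c_2,i_2}\herm$, using the reversal rule for the Hermitian transpose of a product. Second, form the product $\mathbf{A}_{i_1,i_2}\herm \mathbf{A}_{i_1,i_2} = \mathbf{\Lambda}_{c_1,i_1}\herm \mathbf{F}_N\herm \mathbf{\Lambda}_{c_2,i_2}\herm \mathbf{\Lambda}_{c_2,i_2} \mathbf{F}_N \mathbf{\Lambda}_{c_1,i_1}$. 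Third, collapse the inner factor $\mathbf{\Lambda}_{c_2,i_2}\herm \mathbf{\Lambda}_{c_2,i_2} = \mathbf{I}_N$ by Corollary~\ref{cor:permuted_unitary}, leaving $\mathbf{\Lambda}_{c_1,i_1}\herm \mathbf{F}_N\herm \mathbf{F}_N \mathbf{\Lambda}_{c_1,i_1}$. Fourth, collapse $\mathbf{F}_N\herm \mathbf{F}_N = \mathbf{I}_N$ by unitarity of the normalized DFT, leaving $\mathbf{\Lambda}_{c_1,i_1}\herm \mathbf{\Lambda}_{c_1,i_1}$. Fifth, apply Corollary~\ref{cor:permuted_unitary} once more to conclude this equals $\mathbf{I}_N$. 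One should also note that the same chain run in the other order gives $\mathbf{A}_{i_1,i_2} \mathbf{A}_{i_1,i_2}\herm = \mathbf{I}_N$, so $\mathbf{A}_{i_1,i_2}$ is genuinely unitary and not merely an isometry; since the matrix is square this is automatic, but it is worth remarking.

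There is essentially no hard part here — the result is a corollary-of-corollaries, and the proof is a routine telescoping of unitary factors. If anything, the only point requiring a word of care is the convention on the DFT matrix: the statement "$N$-point DFT matrix" must be taken with the unitary (i.e., $1/\sqrt{N}$) normalization for $\mathbf{F}_N\herm \mathbf{F}_N = \mathbf{I}_N$ to hold exactly, which is consistent with how $\mathbf{A}^{-1}$ was written as $\mathbf{A}\herm$-style expressions in eq.~\eqref{eq:tx_afdm}–\eqref{eq:rx_afdm}; I would simply invoke this as a known standard fact rather than re-derive it. Since the permutation orders $i_1, i_2$ play no role beyond ensuring each $\mathbf{\Lambda}_{\cdot,i}$ remains a diagonal matrix of unit-modulus entries (which Corollary~\ref{cor:permuted_unitary} already guarantees for all $i \in \{1,\ldots,N!\}$), the conclusion holds uniformly over all choices of $i_1, i_2$, completing the proof.
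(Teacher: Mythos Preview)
Your proposal is correct and follows the same approach as the paper: both reduce the unitarity of $\mathbf{A}_{i_1,i_2}$ to the unitarity of its three factors $\mathbf{\Lambda}_{c_2,i_2}$, $\mathbf{F}_N$, and $\mathbf{\Lambda}_{c_1,i_1}$, invoking Corollary~\ref{cor:permuted_unitary} for the permuted chirp matrices and the standard unitarity of the normalized DFT for $\mathbf{F}_N$. Your version is more explicit in writing out the telescoping computation, whereas the paper states the result in a single sentence, but the underlying argument is identical.
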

\begin{proof}
The proof follows from the fact that the \ac{DFT} matrix $\mathbf{F}_N$ is unitary, and that the diagonal chirp matrices $\mathbf{\Lambda}_{c_2, i_2}, \mathbf{\Lambda}_{c_1,i_2}$ are also unitary as shown in Corollary \ref{cor:permuted_unitary}.
\end{proof}

While trivial, the following condition with two \ac{CP}-\ac{DAFT} matrices with non-matching permutation orders, $i'_1 \neq i_1$, $i'_2 \neq i_2$, should be highlighted for clarification: 
\begin{equation}
\mathbf{A}_{i'_1,i'_2}\herm \cdot \mathbf{A}_{i_1,i_2}^{\vphantom{H}}  \neq \mathbf{I}_N.
\end{equation}

\subsection{Signal Model of Chirp-Permuted {AFDM} (CP-AFDM)}
\label{sec:signal_CPAFDM}

Given the consolidated \ac{CP}-\ac{DAFT}, we propose a novel waveform modulation scheme, which we refer to as the \ac{CP-AFDM} waveform.
Specifically, the time-domain \ac{CP-AFDM} signal $\mathbf{s} \in \mathbb{C}^{N \times 1}$ is obtained by applying the inverse \ac{CP}-\ac{DAFT} with permutation parameters $i_1,i_2$ to the transmit symbols $\mathbf{x} \in \mathcal{X}^N \in \mathbb{C}$, which is given by
\begin{equation}
\label{eq:txsig_cpafdm}
\mathbf{s}_{i_1,i_2} \triangleq \mathbf{A}_{i_1,i_2}^{-1}\mathbf{x} = \mathbf{\Lambda}_{c_1,i_1}\herm \mathbf{F}_N\herm \mathbf{\Lambda}_{c_2,i_2}\herm \mathbf{x} \in \mathbb{C}^{N \times 1}.
\end{equation}

Or alternatively, the modulated \ac{CP-AFDM} signal can be represented in terms of a linear transform kernel, similar to that of other multicarrier waveforms, including \ac{OFDM} and \ac{AFDM}, where $\kappa_n(m)$ is the modulation kernel 
\begin{align}
\kappa_n(m) &\triangleq \tfrac{1}{\sqrt{N}}e^{j2\pi \cdot c_1 \cdot \Pi_{i_1}\!(n)^2}\!\! \cdot e^{j2\pi \frac{mn}{N}} \cdot e^{j2\pi c_2 \cdot \Pi_{i_2}\!(m)^2}, \label{eq:kernal_CPAFDM}  \\
& = \tfrac{1}{\sqrt{N}} \mathrm{exp}\Big[j2\pi \big( c_1 \Pi_{i_1}\!(n)^2 + c_2 \Pi_{i_2}\!(m)^2 + \tfrac{mn}{N} \big) \Big], \nonumber
\end{align}
such that each $n$-th sample of the modulated \ac{CP-AFDM} signal can be expressed concisely as
\begin{equation}
[\mathbf{s}_{i_1,i_2}]_n = \sum_{m=0}^{N-1} x_m \cdot \kappa_n(m) \in \mathbb{C}^{N \times 1}.
\end{equation}

Trivially, when $i_1 = i_2 = 1$, i.e., no permutation is applied to either of the chirp sequences, the \ac{CP-AFDM} reduces to the conventional \ac{AFDM} waveform, as described in eq. \eqref{eq:tx_afdm}.

\begin{figure*}
\centering
\includegraphics[width=1.85\columnwidth]{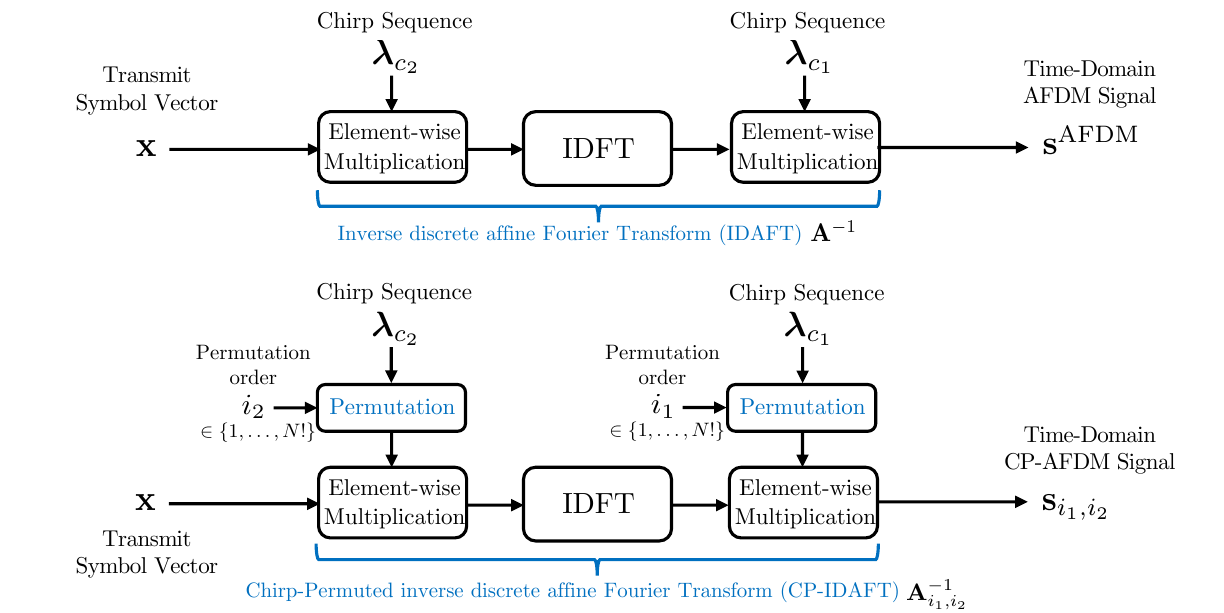}
\caption{A schematic block diagram of the conventional \ac{AFDM} and the proposed \ac{CP-AFDM} transmitter structures, as described by eq. \eqref{eq:tx_afdm} and eq. \eqref{eq:txsig_cpafdm}, respectively, highlighting the two additional permutation operations in the \ac{CP-AFDM} transmitter unto the chirp sequences.}
\label{fig:superfigure}
\vspace{-1ex}
\end{figure*}

Then, the received signal $\mathbf{r}_{i_1,i_2} \in \mathbb{C}^{N \times 1}$ after the \ac{CP-AFDM} modulation over the doubly-dispersive channel in eq. \eqref{eq:IO_matrix} is given by
\begin{equation}
\mathbf{r}_{i_1,i_2} = \mathbf{H} \cdot \mathbf{s}_{i_1,i_2} + \mathbf{w} \in \mathbb{C}^{N \times 1},
\end{equation}
which in turn is demodulated using the forward \ac{CP}-\ac{DAFT} with matching permutation parameters $i_1,i_2$ to yield the demodulated symbols $\mathbf{y}_{i_1,i_2} \in \mathbb{C}^{N \times 1}$,
\begin{equation}
\mathbf{y}_{i_1,i_2} = \mathbf{A}_{i_1,i_2} \cdot \mathbf{r}_{i_1,i_2} = \mathbf{A}_{i_1,i_2}^{\vphantom{-1}} \cdot \mathbf{H} \cdot \mathbf{A}_{i_1,i_2}^{-1} \cdot \mathbf{x} + \tilde{\mathbf{w}}_{i_1,i_2} \in \mathbb{C}^{N \times 1},
\end{equation}
with the transformed \ac{AWGN} vector $\tilde{\mathbf{w}}_{i_1,i_2} \triangleq \mathbf{A}_{i_1,i_2} \cdot \mathbf{w} \in \mathbb{C}^{N \times 1}$, which retains the same statistical distribution as $\mathbf{w}$, i.e., $\tilde{\mathbf{w}}_{i_1,i_2} \sim \mathcal{CN}(0,\sigma^2_w\mathbf{I}_N)$ as the \ac{CP}-\ac{DAFT} is unitary.

It is important to highlight that when non-matching permutation indices are utilized for demodulation, i.e., the received signal $\mathbf{r}_{i_1,i_2}$ is demodulated using the forward \ac{CP}-\ac{DAFT} $\mathbf{A}_{i'_1,i'_2}$ with permutation indices $i'_1 \neq i_1$ or $i'_2 \neq i_2$, the demodulated symbols will be significantly distorted yielding non-informative signal.
This interesting property, unique to the proposed \ac{CP-AFDM}, can be exploited to enable efficient physical layer security, as will be elaborated in Section \ref{sec:application_secure}.

\subsection{One-Sided Chirp-Permuted AFDM Waveform}

Finally, let us also define a special case of the general \ac{CP}-\ac{DAFT} and consequently the \ac{CP-AFDM} waveform presented in eq. \eqref{eq:cpdaft_matrix} and eq. \eqref{eq:txsig_cpafdm}, where the permutation operation is only applied to the second chirp sequence and the first chirp sequence remains unpermuted, i.e., $i_1 = 1$ and $i_2 \in \{1,\ldots,N!\}$.
This subcase is dubbed the one-sided \ac{CP-AFDM} waveform, in order to differentiate from the previously defined general \textit{two-sided} \ac{CP-AFDM} waveform, where both chirp sequences are permuted.
As will be shown in the latter sections, this one-sided \ac{CP-AFDM} waveform is the useful version of the \ac{CP-AFDM} waveform, which retains all of the beneficial properties and characteristics of the conventional \ac{AFDM} in doubly-dispersive wireless channels, while still providing a new degree of freedom in the design of the transmit signal in the permutation index $i_2 \in \{1,\ldots,N!\}$.
In addition, as the one-sided \ac{CP-AFDM} is a subcase of the general \ac{CP-AFDM}, it also satisfies the unitarity and orthonormality conditions as analyzed in Section \ref{subsec:chirp-permuted_AFT}.

Finally, for completeness, the signal models and relevant models for the one-sided \ac{CP-AFDM} is presented, with the $N$-point one-sided \ac{CP}-\ac{DAFT} matrix given by
\begin{align}
\mathbf{A}_{i_2} &\triangleq \mathrm{diag}\big(\mathbf{\Pi}_{i_2} \!\cdot\! \boldsymbol{\lambda}_{c_2}\big) \cdot \mathbf{F}_N \cdot \mathrm{diag}\big(\boldsymbol{\lambda}_{c_1}\big) \nonumber \\
& =  \mathbf{\Lambda}_{c_2,i_2} \mathbf{F}_N \mathbf{\Lambda}_{c_1} \in \mathbb{C}^{N \times N}.
\label{eq:cpdaft_onesided_matrix}
\end{align}

The corresponding transmit signal $\mathbf{s}_{i_2} \in \mathbb{C}^{N \times 1}$, received signal $\mathbf{r}_{i_2} \in \mathbb{C}^{N \times 1}$, and demodulated signal $y_{i_2} \in \mathbb{C}^{N \times 1}$ of the one-sided \ac{CP-AFDM} are given by
\begin{equation}
\mathbf{s}_{i_2} \triangleq \mathbf{A}_{i_2}^{-1}\mathbf{x} = \mathbf{\Lambda}_{c_1}\herm \mathbf{F}_N\herm \mathbf{\Lambda}_{c_2,i_2}\herm \mathbf{x} \in \mathbb{C}^{N \times 1},
\end{equation}
\begin{equation}
\mathbf{r}_{i_2} = \mathbf{H} \cdot \mathbf{s}_{i_2} + \mathbf{w} \in \mathbb{C}^{N \times 1},
\end{equation}
\begin{equation}
\mathbf{y}_{i_2} = \mathbf{A}_{i_2} \cdot \mathbf{r}_{i_2} = \mathbf{A}_{i_2}^{\vphantom{-1}} \cdot \mathbf{H} \cdot \mathbf{A}_{i_2}^{-1} \cdot \mathbf{x} + \tilde{\mathbf{w}}_{i_2} \in \mathbb{C}^{N \times 1}.
\end{equation}


\section{Waveform and Performance Analysis}
\label{sec:analysis}

In this section, we analyze the proposed \ac{CP-AFDM} waveform and its one-sided subcase, in terms of the waveform characteristics and performance.
Specifically, it will be shown that the one-sided \ac{CP-AFDM} waveform retains many beneficial properties of the conventional \ac{AFDM} such as robustness and orthogonality over doubly-dispersive channels (and hence good communications performance), in addition to high delay-Doppler resolution, while also not altering any fundamental waveform properties such as \ac{PAPR}.
More interestingly, it will be shown that the \ac{CP-AFDM} even improves upon the \ac{SotA} waveforms in terms of the Doppler resolution and ambiguity function \acf{PSLR}. 

\subsection{Effective Channel Analysis}
\label{subsec:effective_channel_analysis}

\begin{figure*}[t]
\begin{equation}
\mathbf{G}_{i_1,i_2} \triangleq \mathbf{A}_{i_1,i_2}^{\vphantom{-1}}  \cdot \mathbf{H} \cdot \mathbf{A}_{i_1,i_2}^{-1} = \big(\mathbf{\Lambda}_{c_2,i_2} \cdot \mathbf{F}_N \cdot \mathbf{\Lambda}_{c_1,i_1}\big) \cdot \Big( \sum_{p=1}^{P} h_p \!\cdot\! \mathbf{\Phi}_{p} \!\cdot\! \mathbf{W}^{f_p} \!\cdot\! \mathbf{L}^{\ell_p}\Big) \cdot \big(\mathbf{\Lambda}_{c_1,i_1}\herm \cdot \mathbf{F}_N\herm \cdot \mathbf{\Lambda}_{c_2,i_2}\herm\big) \in \mathbb{C}^{N \times N}.\label{eq:CPAFDM-effective_channel}
\end{equation}
\vspace{-1.5ex}
\begin{equation}
\mathbf{G}_{i_2} \triangleq \mathbf{A}_{i_2}^{\vphantom{-1}}  \cdot \mathbf{H} \cdot \mathbf{A}_{i_2}^{-1} = \big(\mathbf{\Lambda}_{c_2,i_2} \cdot \mathbf{F}_N \cdot \mathbf{\Lambda}_{c_1}\big) \cdot \Big( \sum_{p=1}^{P} h_p \!\cdot\! \mathbf{\Phi}_{p} \!\cdot\! \mathbf{W}^{f_p} \!\cdot\! \mathbf{L}^{\ell_p}\Big) \cdot \big(\mathbf{\Lambda}_{c_1}\herm \cdot \mathbf{F}_N\herm \cdot \mathbf{\Lambda}_{c_2,i_2}\herm\big) \in \mathbb{C}^{N \times N}.
\label{eq:osCPAFDM-effective_channel}
\end{equation}
\vspace{-0.5ex}
\hrule
\vspace{-1.5ex}
\end{figure*}

\begin{figure*}[b!]
    \vspace{-2.5ex}
    \centering
    \begin{subfigure}[b]{0.325\textwidth}
        \centering
        \includegraphics[width=\linewidth]{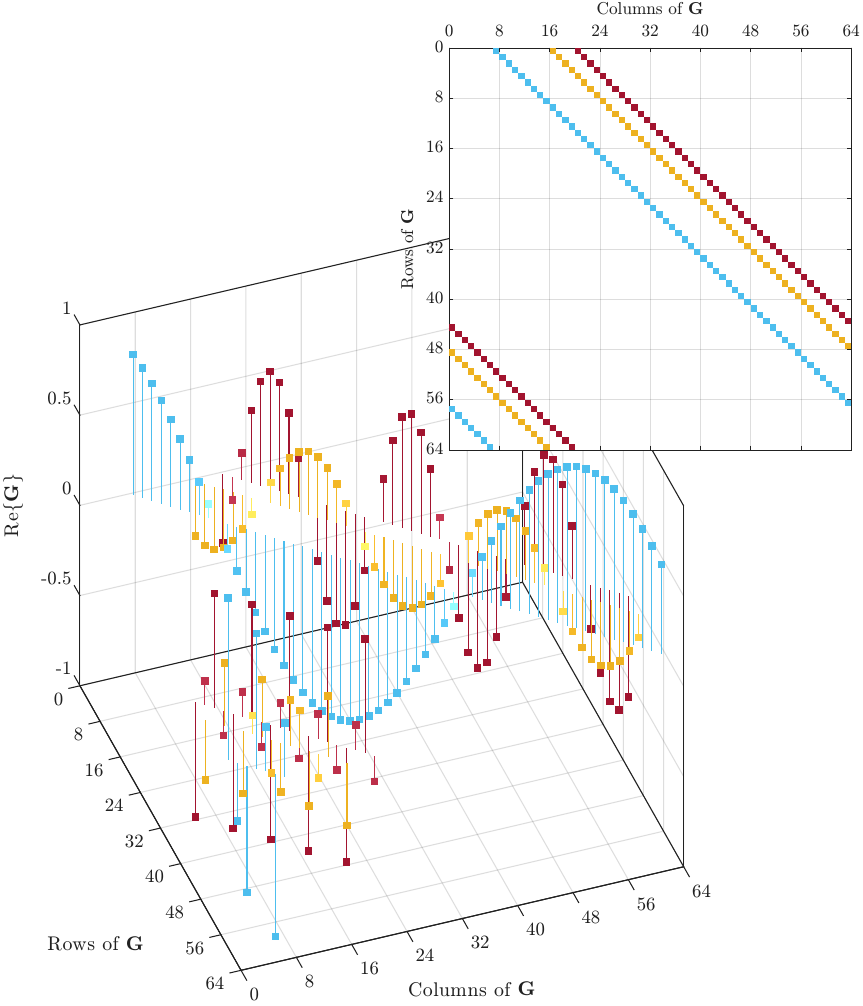}
                \vspace{-3ex}
        \caption{Conventional AFDM.}
        \label{fig:eff_channel_AFDM}
    \end{subfigure}
    \hfill
    \begin{subfigure}[b]{0.325\textwidth}
        \centering
        \includegraphics[width=\linewidth]{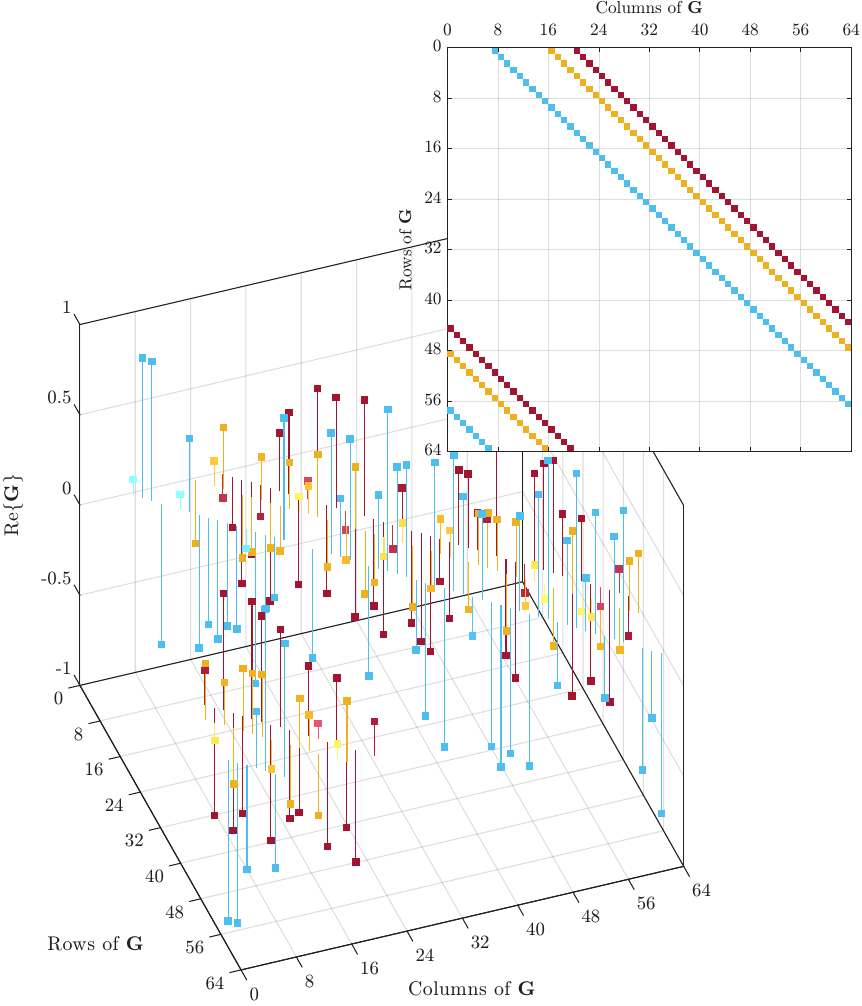}
                \vspace{-3ex}
        \caption{One-sided CP-AFDM.}
        \label{fig:eff_channel_3d}
    \end{subfigure}
    \hfill
    \begin{subfigure}[b]{0.325\textwidth}
        \centering
        \includegraphics[width=\linewidth]{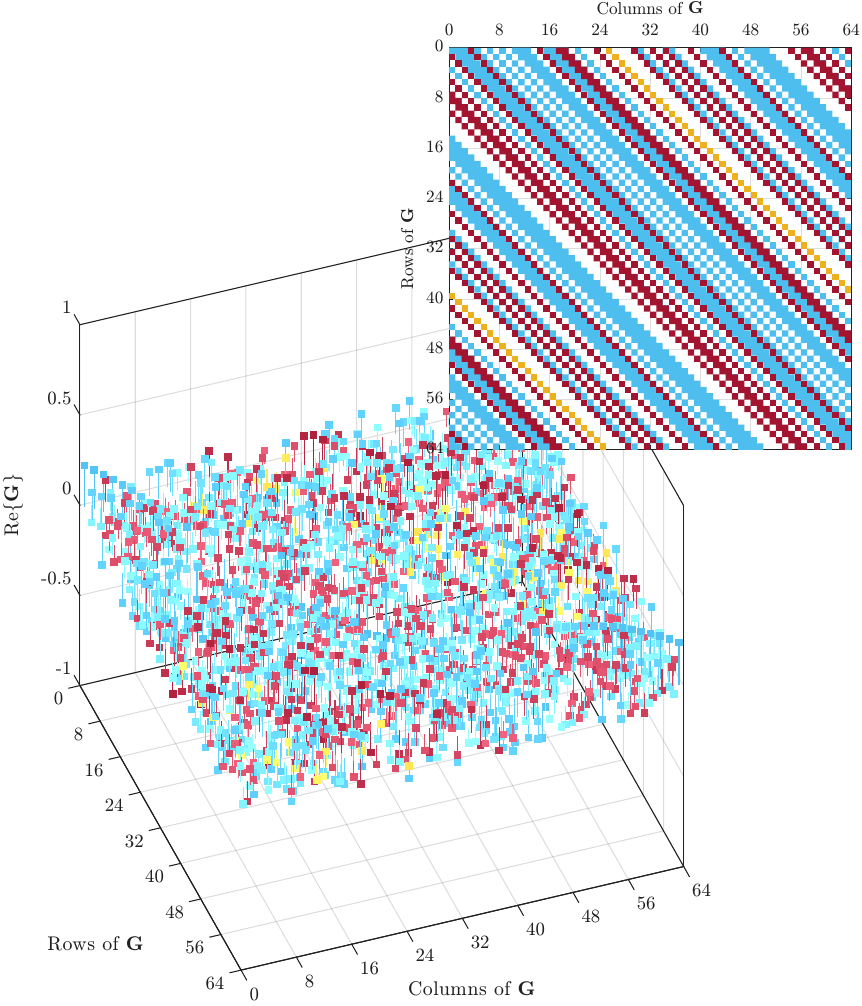}
        \vspace{-3ex}
        \caption{Two-sided CP-AFDM.}
        \label{fig:eff_channel_cpafdm}
    \end{subfigure}
    \vspace{-0.5ex}
    \caption{Effective channel visualizations of the waveforms in a 3-path doubly-dispersive channel with integer normalized delays and Doppler shifts, in both 3D and 2D views, illustrating  the magnitude of the real part of the effective channel coefficients, and its sparse structure as a top-down view.}
    \label{fig:eff_channel_vis}
\end{figure*}
%

%




The effective channel of a waveform describes the full input-output relationship between the transmit symbols and the demodulated symbols, and provides various insights of the behavior of the waveform, especially useful to analyze the delay-Doppler orthogonality and robustness properties.

To that end, first consider the effective channels of the proposed \ac{CP-AFDM} waveforms, for both the two-sided case and the one-sided case, as given in eq. \eqref{eq:CPAFDM-effective_channel} and eq. \eqref{eq:osCPAFDM-effective_channel} on top of the following page, where the convolutional matrix form of the doubly-dispersive channel intrinsically defined in eq. \eqref{eq:IO_matrix} is given by
\begin{equation}
\mathbf{H} \triangleq \sum_{p=1}^{P} \mathbf{H}_p = \sum_{p=1}^{P} {h_p \!\cdot\! \mathbf{\Phi}_{p} \!\cdot\! \mathbf{W}^{f_p} \!\cdot\! \mathbf{L}^{\ell_p} \in  \mathbb{C}^{N \times N}},
\label{eq:conv_chan}
\end{equation}
which is composed of $P$ dinstict diagonal matrices $\mathbf{H}_p \in \mathbb{C}^{N \times N}$, each representing a unique delay-Doppler tap of the channel with the corresponding channel coefficient $h_p$ and normalized delay-Doppler shifts $\ell_p$ and $f_p$.

The effective channels can alternatively be reformulated as
\begin{equation}
\mathbf{G}_{i_1,i_2} \triangleq \mathbf{\Lambda}_{c_2,i_2}^{\vphantom{H}} \cdot \mathbf{\Xi}_{i_1}^{\vphantom{H}} \cdot \mathbf{\Lambda}_{c_2,i_2}\herm \in \mathbb{C}^{N \times N},
\label{eq:effchan_} \vspace{-2ex}
\end{equation}
\begin{align}
\mathbf{\Xi}_{i_1}^{\vphantom{H}} \triangleq \mathbf{F}_N \mathbf{\Lambda}_{c_1,i_1}  \Big( \sum_{p=1}^{P} h_p \mathbf{\Phi}_{p} \mathbf{W}^{f_p}\mathbf{L}^{\ell_p}\Big) \mathbf{\Lambda}_{c_1,i_1}\herm \mathbf{F}_N\herm \in  \mathbb{C}^{N \times N}, \nonumber \\[-2.5ex]
\label{eq:intermediate_eff_channel_tsCPAFDM}
\end{align}

\noindent and
\begin{equation}
\mathbf{G}_{i_2} \triangleq \mathbf{\Lambda}_{c_2,i_2}^{\vphantom{H}} \cdot \mathbf{\Xi}^{\vphantom{H}} \cdot \mathbf{\Lambda}_{c_2,i_2}\herm \in \mathbb{C}^{N \times N}, \vspace{-1ex}
\end{equation}
\begin{equation}
\mathbf{\Xi}^{\vphantom{H}} \triangleq \mathbf{F}_N \mathbf{\Lambda}_{c_1}  \Big( \sum_{p=1}^{P} h_p \mathbf{\Phi}_{p} \mathbf{W}^{f_p}\mathbf{L}^{\ell_p}\Big) \mathbf{\Lambda}_{c_1}\herm \mathbf{F}_N\herm \in  \mathbb{C}^{N \times N}.
\label{eq:intermediate_eff_channel_osCPAFDM}
\end{equation}
where the intermediate structure matrices $\mathbf{\Xi}_{i_1}$ and $\mathbf{\Xi}$ of the effective channels, represent the effective channel without the effect of the second diagonal chirp $\mathbf{\Lambda}_{c_2,i_2}$ present in the \ac{CP}-\ac{DAFT}/\ac{IDAFT}.

It is important to highlight that the transformation performed by the second diagonal chirp matrices $\mathbf{\Lambda}_{c_2,i_2}$ and $\mathbf{\Lambda}_{c_2,i_2}\herm$ unto the structure matrices $\mathbf{\Xi}_{i_1}$ and $\mathbf{\Xi}$, is only a similarity transformation, i.e., does not change the structure of the effective channel (positions of the non-zero elements), but only scales each element.

Specifically, the intermediate effective channel is scaled element-wise, as 
%
\begin{equation}
\big[\mathbf{G}_{i_1,i_2}^{\vphantom{H}}\big]_{n,n'} = e^{-j2\pi c_2 \big(\Pi_{i_2}(n)^2 - \Pi_{i_2}(n')^2\big)} \cdot \big[\mathbf{\Xi}_{i_1}^{\vphantom{H}}\big]_{n,n'},
\label{eq:effchane_elem_tscpafdm}
\end{equation}
\begin{equation}
\big[\mathbf{G}_{i_2}^{\vphantom{H}}\big]_{n,n'} = e^{-j2\pi c_2 \big(\Pi_{i_2}(n)^2 - \Pi_{i_2}(n')^2\big)} \cdot \big[\mathbf{\Xi}^{\vphantom{H}}\big]_{n,n'},
\label{eq:effchan_element_osCPAFDM}
\end{equation}
respectively for the two-sided and the one-sided \ac{CP-AFDM}.

For comparison, the reformulated form of the conventional \ac{AFDM} effective channel $\mathbf{G}^\mathrm{AFDM}$ in eq. \eqref{eq:eff_afdm}, is given by
\begin{equation}
\mathbf{G}^\mathrm{AFDM} \triangleq \mathbf{\Lambda}_{c_2}^{\vphantom{H}} \cdot \mathbf{\Xi}^{\vphantom{H}} \cdot \mathbf{\Lambda}_{c_2}\herm \in \mathbb{C}^{N \times N}, \vspace{-1ex}
\label{eq:effchan_AFDM}
\end{equation}
\begin{align}
\big[\mathbf{G}_{\mathrm{AFDM}}^{\vphantom{H}}\big]_{n,n'} = e^{-j2\pi c_2 \big((n)^2 - (n')^2\big)} \cdot \big[\mathbf{\Xi}^{\vphantom{H}}\big]_{n,n'}.
\label{eq:effchan_elem_AFDM}
\end{align}

The above reveals that the structure of the effective channel of the one-sided \ac{CP-AFDM} $\mathbf{G}_{i_2}$ in eq. \eqref{eq:osCPAFDM-effective_channel} is identical to that of the conventional \ac{AFDM} $\mathbf{G}^\mathrm{AFDM}$ in eq. \eqref{eq:effchan_AFDM} -- where both are element-wise scaled versions of $\mathbf{\Xi}$ -- while the effective channel of the two-sided \ac{CP-AFDM} $\mathbf{G}_{i_1,i_2}$ in eq. \eqref{eq:CPAFDM-effective_channel} is quite different, being an element-wise scaled version of a different structure matrix $\mathbf{\Xi}_{i_1}$.

This behavior is clearly illustrated in Fig.~\ref{fig:eff_channel_vis}, where the one-sided \ac{AFDM} waveform has an identical channel structure as the conventional \ac{AFDM}, as seen by the same non-zero element positions of the matrix viewed in 2D, but results in vastly different channel coefficients due to the permutation in the scaling factor of eq. \eqref{eq:effchan_element_osCPAFDM} against to that of eq. \eqref{eq:effchan_elem_AFDM}, as seen by the completely different channel coefficients viewed in 3D (only the real part visualized).

Therefore, the one-sided \ac{CP-AFDM} waveform retains the beneficial properties of the conventional \ac{AFDM} waveform, such as orthogonality and robustness against doubly-dispersive channels and highly sparse effective channel, while also providing a new degree of freedom in the design of the transmit signal through the permutation index $i_2$.

Therefore, the effective channel structure -- and consequently the delay-Doppler robustness properties of the waveform -- is only dependent on the first ($c_1$) chirp and not the second ($c_2$) chirp.
Namely, when the first chirp sequence remains unpermuted, i.e., $i_1 = 1$, and is set to the optimal value, the two-sided Fourier transforms spread each path of the channel $\mathbf{H}$ to achieve deterministic integer orthogonality in delay-Doppler domain with fractional Doppler shift-dependent sideband, as analyzed in \cite{Rou_SPM24}.

Specifically, the center of the $p$-th band-diagonal is related to the location index \vspace{1ex}
\begin{align}
\label{eq:locationindex}
\mathrm{loc}_p & \triangleq \big[f_p + 2Nc_1\ell_p\big]_{\mathrm{mod}\, N} \\[1.5ex]
& = \Big[f_p + \ell_p\big(1 + 2(f^\mathrm{max} + \xi) \big)\Big]_{\mathrm{mod}\, N}, \nonumber 
\end{align}

\noindent where the latter reformualtion in terms of the path delay and Doppler indices and \ac{AFDM} parameters is obtained under the optimal parametrization of $c_1$ as described in Section~\ref{subsec:AFDMwaveform}.

The $p$-th location index $\mathrm{loc}_p$ describes the position of the non-zero channel coefficient element in the first row (or column) of the effective matrix corresponding to the $p$-th path\footnotemark, and is a one-to-one mapping of the integer delay-Doppler indices within the unambiguous range supported by the \ac{AFDM} parameters, i.e., $\ell_p \in \{0,\ldots,\ell^\mathrm{max}\}$ and $f_p \in \{0,\ldots,f^\mathrm{max}\}$ to an index in $\{1,\ldots,N\}$.

In addition, due to the deterministic nature of the effective channel structure, identical to that of the conventional \ac{AFDM}, all algorithms developed for the conventional \ac{AFDM} using known \ac{CSI}, such as symbol detection, channel estimation, \ac{ISAC}, etc., can be directly used for the one-sided \ac{CP-AFDM} waveform.

On the other hand, the two-sided \ac{CP-AFDM} waveform, i.e., $i_1 \neq 1$, results in a completely different and non-beneficial effective channel structure, which does not provide any robustness against delay-Doppler inference, or have efficient structure for channel estimation and \ac{ISAC} functionalities.
Therefore, while valid modulation and detection can be performed in theory with the two-sided \ac{CP-AFDM} waveform, it is not useful in practice especially in doubly-dispersive channels as it results in a non-significant and dense effective channel, and therefore no additional benefits, compared to the one-sided case and even the conventional waveforms.

In all, from onwards, we only consider the one-sided \ac{CP-AFDM} waveform and refer to it simply as the proposed \ac{CP-AFDM} for brevity, and provide further analysis in order to proposed the waveform for as an enhanced waveform for \ac{ISAC} in doubly-dispersive channels, with a novel dimension of freedom in the design of the transmit signal through the permutation index $i_2$.

\vspace{1.5ex}
{\subsection{BER Performance}
\label{subsec:performance_analysis_BER}
\vspace{1ex}

Given that the proposed \ac{CP-AFDM} preserves the same effective channel structure and behavior as \ac{AFDM}, its communication performance in doubly-dispersive environments is therefore expected to be equivalent. 

Fig.~\ref{fig:BER} presents the simulated \ac{BER} performance of \ac{CP-AFDM} in comparison with \ac{AFDM}, \ac{OTFS}, and \ac{OFDM} over a doubly-dispersive channel with $P = 3$ paths and $N = 64$ subcarriers, utilizing conventional \ac{MMSE} detection with matched demodulation, and perfect \acf{CSI}.

As shown, \ac{CP-AFDM} achieves the same full diversity and \ac{BER} performance as \ac{AFDM}, due to its inherent delay-Doppler orthogonality and robustness inherited from the underlying structure. 
This performance closely matches that of \ac{OTFS} at low-to-moderate \ac{SNR} levels; however, unlike \ac{OTFS}, which is known to suffer diversity degradation at high \ac{SNR} without precoding \cite{surabhi2019diversity}, both \ac{AFDM} and \ac{CP-AFDM} maintain robust performance across the entire \ac{SNR} range. 
In contrast, \ac{OFDM} exhibits a significant performance loss in the presence of Doppler spread, primarily due to inter-carrier interference (\ac{ICI}), highlighting the superior robustness of next-generation waveforms.

\begin{figure}[H]
\vspace{4ex}
\centering
\includegraphics[width=\columnwidth]{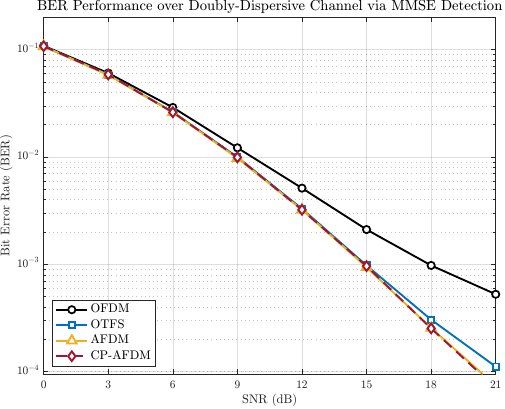}
\caption{BER performance of the proposed CP-AFDM and various waveforms over a doubly-dispersive channel with $P = 3$ paths and $N = 64$ subcarriers, under MMSE estimation with matched demodulator and perfect CSI.}
\label{fig:BER}
\end{figure}
\vspace{1ex}
}

\footnotetext{For fractional normalized Doppler shift cases, the location index describes the center (highest peak) of the band diagonals, in which the power decaying away from the central diagonal \cite{Rou_SPM24}.}

\newpage
\subsection{CSI-based Radar Parameter Estimation Performance}
\label{subsec:performance_analysis_ISAC}

Another important implication of the deterministic effective channel structure and delay-Doppler orthogonality, also discussed in Section~\ref{subsec:effective_channel_analysis} and eq.~\eqref{eq:locationindex}, is the applicability of \ac{CSI}-based radar parameter estimation, or conversely, radar-assisted \ac{CSI} acquisition. 
This concept, in the context of delay-Doppler representative waveforms such as \ac{OTFS} and \ac{AFDM}, has been gaining attention and thorouhgly explored in the literature via a plethora of methods \cite{muppaneni2023channel,ranasinghe2024fast,ranasinghe2024joint}.

These methods exploit the fact that the effective channel matrices of \ac{AFDM}, \ac{OTFS}, and also the proposed one-sided \ac{CP-AFDM} are fully characterized by the set of normalized delay indices $\ell_p$, normalized digital Doppler shifts $f_p$, and complex channel gains $h_p$. 
Accordingly, the full effective channel estimation problem reduces to the joint estimation of the $3P$ parameters: $\{\ell_1, \ldots, \ell_P\}$, $\{f_1, \ldots, f_P\}$, and $\{h_1, \ldots, h_P\}$.

To this end, the radar parameter estimation performance of \ac{CP-AFDM} is evaluated in comparison to \ac{AFDM} and \ac{OTFS}, employing the state-of-the-art method of~\cite{ranasinghe2024joint}, assuming a pilot-based transmission and leveraging a \ac{PDA} estimator for the radar parameters.

Consistent with the \ac{BER} analysis, the \ac{CP-AFDM} retains the full delay-Doppler representation of the effective channel and merely permutes the path gains $h_p$ along the diagonals. 
As a result, the delay-Doppler statistics remain unchanged from those of \ac{AFDM}, leading to virtually identical \ac{RMSE} performance in radar parameter estimation, in both range and velocity, as shown in Fig.~\ref{fig:RadarParam}.

These findings also naturally motivate the suitability of \ac{CP-AFDM} for \ac{ISAC} applications, where the joint communication and radar tasks are performed under limited pilot overhead and with partially unknown data streams, as indeed, under the same framework \cite{ranasinghe2024joint} and others \cite{luo2025novel,luo2025target,zhu2024afdm}, \ac{AFDM} facilitates robust \ac{JCDE} through high-resolution estimation of delay and Doppler parameters, and hence also expected of the proposed waveform.

\begin{figure}[t]
\centering
\includegraphics[width=0.95\columnwidth]{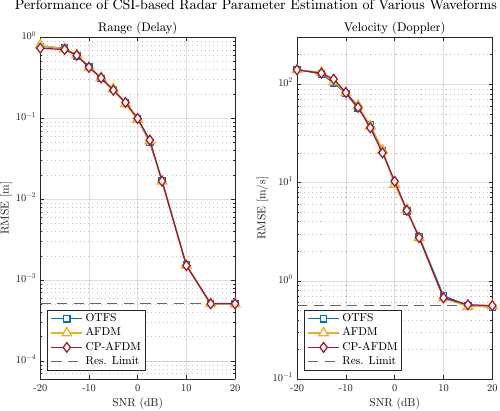}
\caption{Radar parameter estimation performance of the proposed \ac{CP-AFDM} waveform compared to \ac{AFDM} and \ac{OTFS}, in terms of \ac{RMSE} for target range and velocity, via the \ac{SotA} \ac{PDA}-based approach~\cite{ranasinghe2024joint} with $N = 64$.}
\label{fig:RadarParam}
\vspace{-2ex}
\end{figure}

\subsection{\Acf{PAPR} Analysis}
\label{subsec:papr_analysis}

The \ac{PAPR} is a critical metric for waveform evaluation, particularly in multicarrier and chirp-based systems where signal peaks can lead to nonlinear distortion and power inefficiency in practical \ac{RF} chains. 
This subsection presents a detailed mathematical formulation and comparative study of the PAPR characteristics of the proposed \ac{CP-AFDM} waveform (one-sided case), benchmarked against \ac{AFDM} and \ac{OFDM}.

The \ac{PAPR} of the \ac{CP-AFDM} signal with permutation parameters $i_1,i_2$, is given by
\begin{align}
\mathrm{PAPR} &\triangleq \frac{\displaystyle\max_{n} \big|s[n]\big|^2}{\textstyle \frac{1}{N} \sum_{n=0}^{N-1} \big|s[n]\big|^2}\label{eq:cpafdm_papr_def} \\
& = \frac{\displaystyle\max_{n} \big| \textstyle\sum_{m=0}^{N-1} x_m \cdot \kappa_n(m) \big|^2}{\textstyle \frac{1}{N} \sum_{n=0}^{N-1} \big|\sum_{m=0}^{N-1} x_m \cdot \kappa_n(m)\big|^2}. \nonumber
\end{align}

In line with conventional \ac{PAPR} analysis, assuming that the transmit symbols $x_m$ are \ac{i.i.d.} complex random variables with zero mean and variance $\sigma_x^2$, i.e., $x_m \sim \mathcal{CN}(0, \sigma_x^2), \,\forall m$.

Then, the modulated \ac{CP-AFDM} signal sample $s[n]$ is a complex Gaussian random variable by linearity of convolution with a deterministic kernel
\begin{equation}
s[n] \sim \mathcal{CN}\left(0, \sigma_x^2 \textstyle\sum_{m=0}^{N-1} |\kappa_n(m)|^2\right) = \mathcal{CN}\left(0, \sigma_x^2\right),
\end{equation}
where $|\kappa_n(m)|^2 = \tfrac{1}{\sqrt{N}}$ is trivial from eq. \eqref{eq:kernal_CPAFDM}.

Following the above, the instantaneous power samples of the modulated signal follow an exponential distribution,
%
which then, by the Law of Large Numbers, the \ac{PAPR} of the \ac{CP-AFDM} signal converges in distribution to the maximum of $N$ \ac{i.i.d.} exponential random variables,
\begin{equation}
\mathrm{PAPR} \xrightarrow{N \to \infty} {\displaystyle\max_{n} \, \tfrac{1}{\sigma_x^2} \tilde{Z}_n} = \max_n {Z_n},
\end{equation}
where $\tilde{Z}_n \sim \mathrm{Exp}(\sigma_x^2)$ and $Z_n \sim \mathrm{Exp}(1)$.

The \ac{CDF} of the maximum is then given by
\begin{equation}
F_{\mathrm{PAPR}}(\gamma) = \mathbb{P}\left[ \max_n Z_n \leq \gamma \right] = \left(1 - e^{-\gamma} \right)^N,
\end{equation}
and the \ac{CCDF}, often used in \ac{PAPR} analysis, is given by
\begin{equation}
\mathbb{P}[\mathrm{PAPR} > \gamma] = 1 - F_{\mathrm{PAPR}}(\gamma) = 1 - \left(1 - e^{-\gamma} \right)^N.
\label{eq:papr_ccdf}
\end{equation}

The final formulation of \eqref{eq:papr_ccdf} is identical to the asymptotical \ac{PAPR} of general waveforms with Gaussian-approximated symbols, whose time-domain representation is obtained through a unitary transformation, i.e., to that of the \ac{OFDM} and conventional \ac{AFDM} waveforms.

The above analysis is verified in Figure \ref{fig:PAPR_CCDF}, where the \ac{CCDF} of the \ac{PAPR} of the proposed \ac{CP-AFDM} waveform is evaluated numerically and compared against the \ac{OFDM}, \ac{OTFS}, and the conventional \ac{AFDM} waveforms under varying system sizes and constellations.
The \ac{PAPR} behavior is shown to be identical to that of the conventional \ac{OFDM} and \ac{AFDM} waveforms, illustrated with overlapping \ac{CCDF} curves, while the \ac{OTFS} with varying grid sizes can improve the \ac{PAPR} \cite{surabhi2019peak}.

Therefore, somewhat counter-intuitively, it is found that even with random permutations of the chirp sequence of the \ac{IDAFT}, the \ac{PAPR} of the \ac{CP-AFDM} waveform remains the same as the \ac{AFDM}, and the \ac{OFDM}.

\begin{figure}[t]
\begin{subfigure}[c]{1\columnwidth}
\centering
\includegraphics[width=0.97\columnwidth]{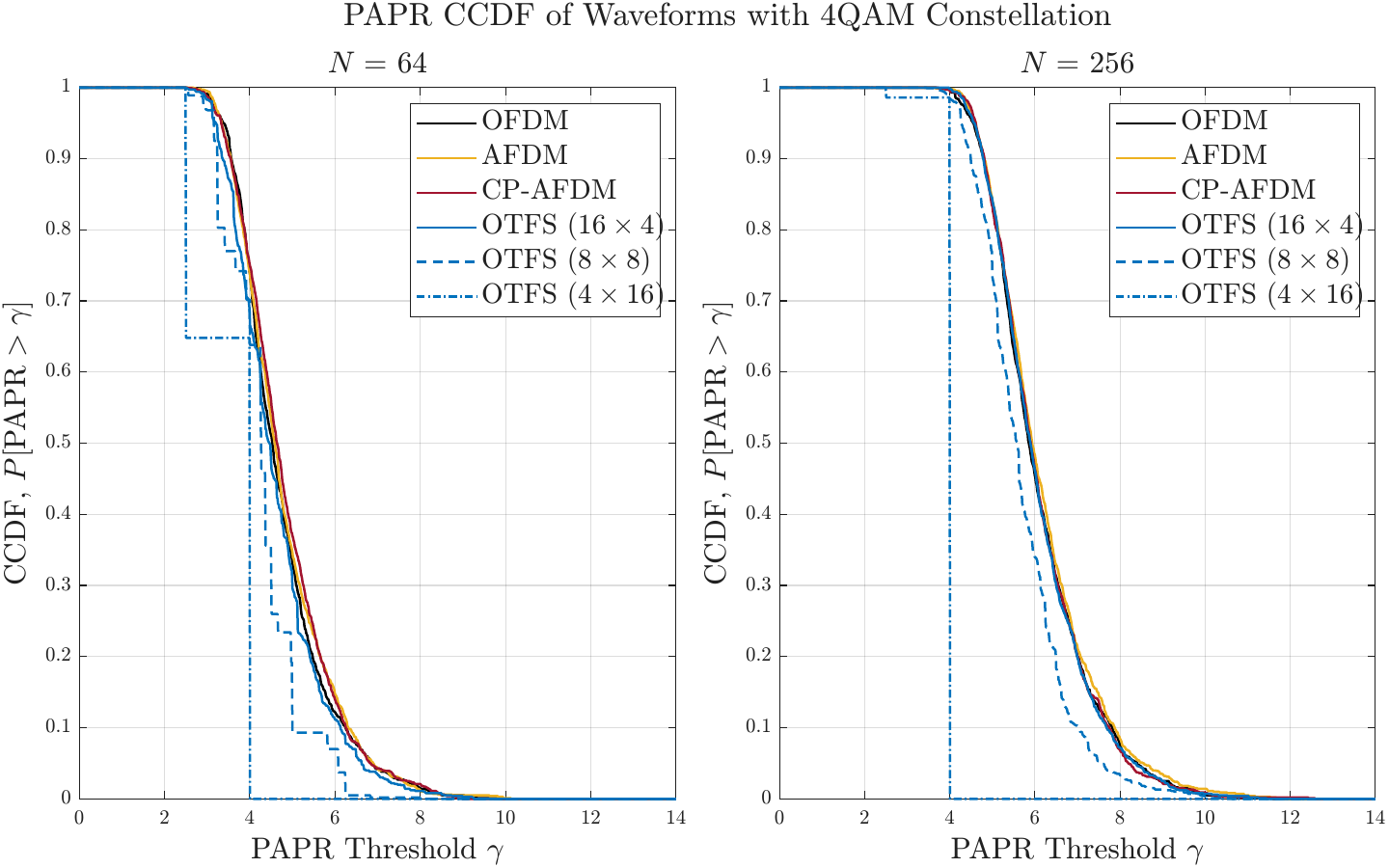}
\vspace{-1ex}
\subcaption{4-QAM symbol constellation.}
\label{fig:PAPR_CCDF_4QAM}
\end{subfigure}
\vspace{1.5ex}

\begin{subfigure}[c]{1\columnwidth}
    \centering
\includegraphics[width=0.97\columnwidth]{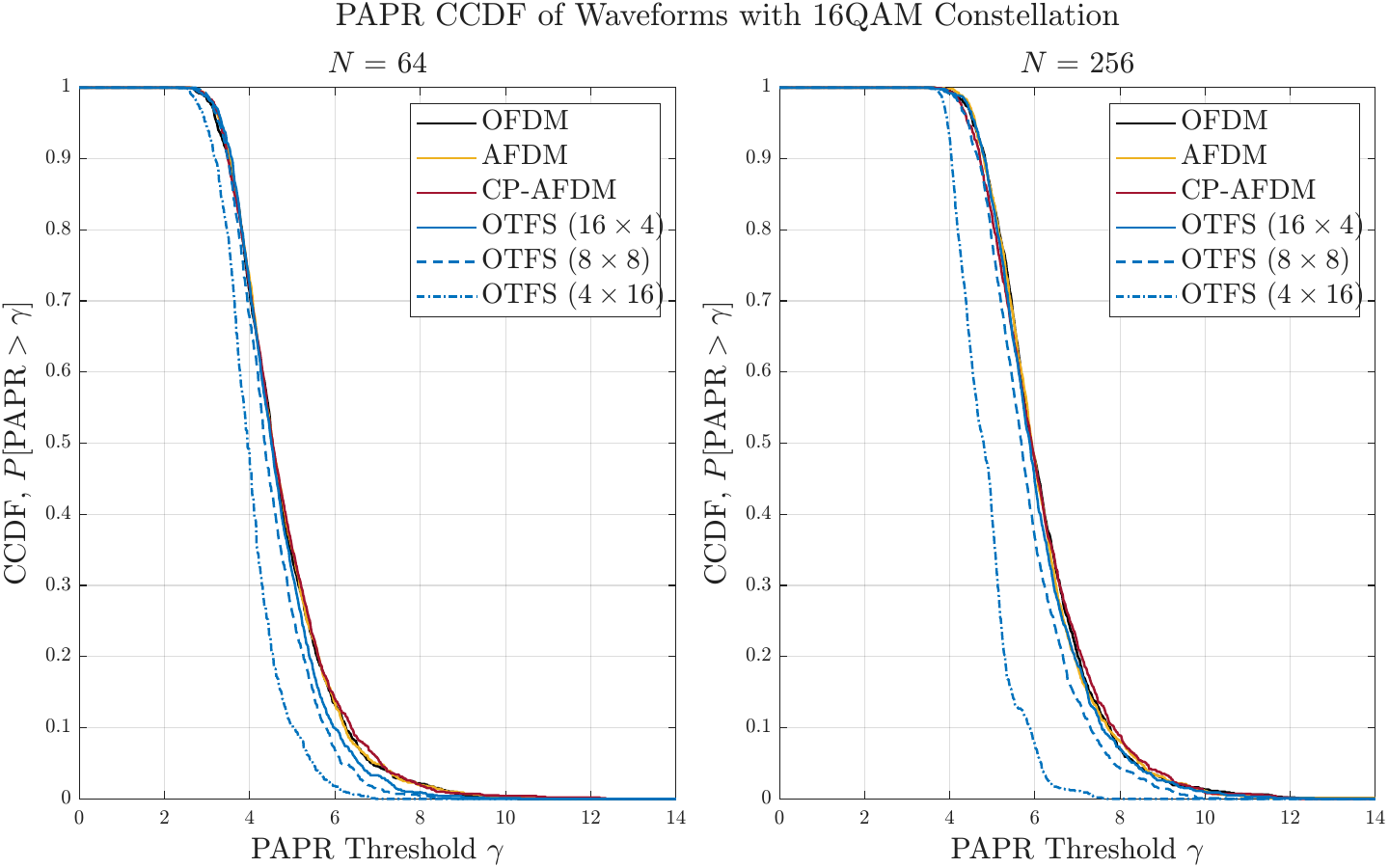}
\vspace{-1ex}
\subcaption{16-QAM symbol constellation.}
\label{fig:PAPR_CCDF_16QAM}
\end{subfigure}
%
%
\caption{CCDF of the waveform \ac{PAPR} of the OFDM, OTFS (for different grid sizes), conventional AFDM, and the proposed \ac{CP-AFDM} under varying system size $N$ and symbol constellation.}
\label{fig:PAPR_CCDF}
\vspace{-3ex}
\end{figure}

\vspace{-2ex}
\subsection{Ambiguity Function (AF) Analysis}

The \ac{AF} is a crucial analytical tool for evaluating the delay-Doppler characteristics of waveforms, especially to be utilized for radar sensing and \ac{ISAC} functionalities.
Mathematically, for a given transmitted signal $s(t)$ in the continuous time domain, the \ac{AF} is defined as
\begin{equation}
A(\tau, \nu) = \left| \int_{-\infty}^{\infty} s(t)s^*(t-\tau)e^{-j2\pi\nu t}\,dt \right|,
\end{equation}
where $\tau$ represents the time delay, and $\nu$ denotes the Doppler frequency shift, which is typically expressed in terms of the normalized delay $\frac{\tau}{T_s}$ and normalized Doppler frequency $\frac{\nu}{f_s}$ in digital implementations with sampled signals, where $T_s \triangleq \frac{1}{f_s}$ is the sampling period with sampling frequency $f_s$.

The zero-Doppler $A(\tau,0)$ and zero-delay $A(0,\nu)$ cuts of the \ac{CP-AFDM} \ac{AF} are compared against various \ac{SotA} waveforms including \ac{OFDM}, \ac{OTFS}, and the conventional \ac{AFDM}, which highlight the Doppler and delay resolution capabilities of the waveform, which have been illustrated for the case of all-ones transmit symbols, ensuring the illustration of the \ac{AF}  without constellation effects, in Fig.~\ref{fig:AF_ones}.

In addition, several metrics of the \ac{AF} are highlighted to quantify the delay-Doppler resolution capabilities of the proposed \ac{CP-AFDM} waveform.
%
Defining the set of delay-Doppler indices within the mainlobe as $\mathcal{W} \triangleq \{\tau, \nu : |\tau| \leq \tau_{3\mathrm{dB}}, |\nu| \leq \nu_{3\mathrm{dB}} \}$ outside the mainlobe, the \ac{PSLR} and \ac{ISLR} of the ambiguity functionare also evaluated, which are respectively defined as \vspace{1ex}
\begin{equation}
\mathrm{PSLR}_{\mathrm{dB}} = 20\log_{10}\!\left(\frac{\big|A(0,0)\big|}{\max\limits_{\tau,\nu \notin \mathcal{W}} \big|A(\tau,\nu)\big|}\right), \vspace{1ex}
\end{equation}
\begin{equation}
\,\mathrm{ISLR}_{\mathrm{dB}} \, = 10\log_{10}\!\left(\!\frac{~~\int\!\!\!\!\!\!\!\int\limits_{\!\!\tau,\nu \notin \mathcal{W}} \!\!\!|A(\tau, \nu)|^2 }{~~\int\!\!\!\!\!\!\!\int\limits_{\!\!\tau,\nu \in \mathcal{W}} \!\!\!|A(\tau, \nu)|^2}\,\right).\vspace{1ex}
\end{equation}

The \ac{PSLR} is defined as the ratio of the mainlobe peak amplitude to the highest sidelobe amplitude in the ambiguity function, quantifying the worst-case ambiguity between the mainlobe and potential false detections. 
A higher \ac{PSLR} is desirable to ensure robust isolation of the mainlobe, thereby minimizing the likelihood of mistaking sidelobes for true targets or signal components. 
On the other hand, the \ac{ISLR} is defined as the ratio of the total energy in all sidelobes to the energy in the mainlobe.
Conversely, a low \ac{ISLR} indicates that the energy of the waveform is predominantly concentrated in the mainlobe, reducing interference from distributed ambiguities. 
Together, the two metrics provide complementary insights to the waveform characteristics in the unambiguous resolution of delay and Doppler characteristics.



As illustrated in Fig.~\ref{fig:AF_ones} and Fig.~\ref{fig:hist_PSLRISLR}, the proposed \ac{CP-AFDM} waveform exhibits a compelling ambiguity function profile, achieving a favorable trade-off between resolution and sidelobe performance compared to baseline waveforms including \ac{AFDM}, \ac{OTFS}, and \ac{OFDM}.

In particular, \ac{CP-AFDM} attains the best Doppler resolution among all considered waveforms, while matching the delay resolution of \ac{AFDM} and \ac{OFDM}. 
This superior Doppler resolution enables finer discrimination of closely spaced Doppler shifts, which is an essential advantage in high-mobility scenarios, all without compromising range resolution.

Regarding sidelobe behavior, \ac{CP-AFDM} demonstrates significant improvement in \ac{PSLR} across both delay and Doppler dimensions relative to \ac{AFDM} and \ac{OTFS}, highlighting enhanced suppression of sidelobes adjacent to the mainlobe.
This property is particularly desirable in radar-centric applications, as it contributes to reduced false alarm rates and improved target separation in detection tasks, including joint radar-communication systems and multipath-resolved receivers.

\begin{figure*}[t]
    \vspace{2ex}
\begin{subfigure}[c]{1\columnwidth}
\includegraphics[width=\columnwidth]{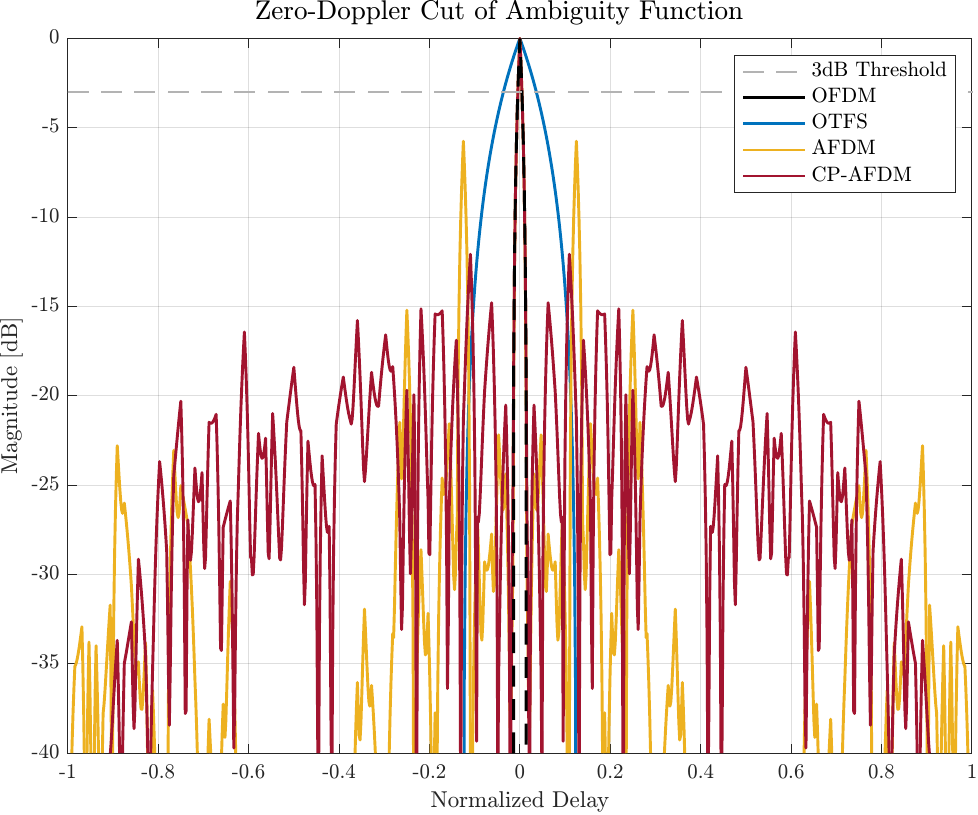}
\subcaption{Zero-Doppler cut.}
\label{fig:AF_ones_zeroDoppler}
\end{subfigure}
\begin{subfigure}[c]{1\columnwidth}
\includegraphics[width=1\columnwidth]{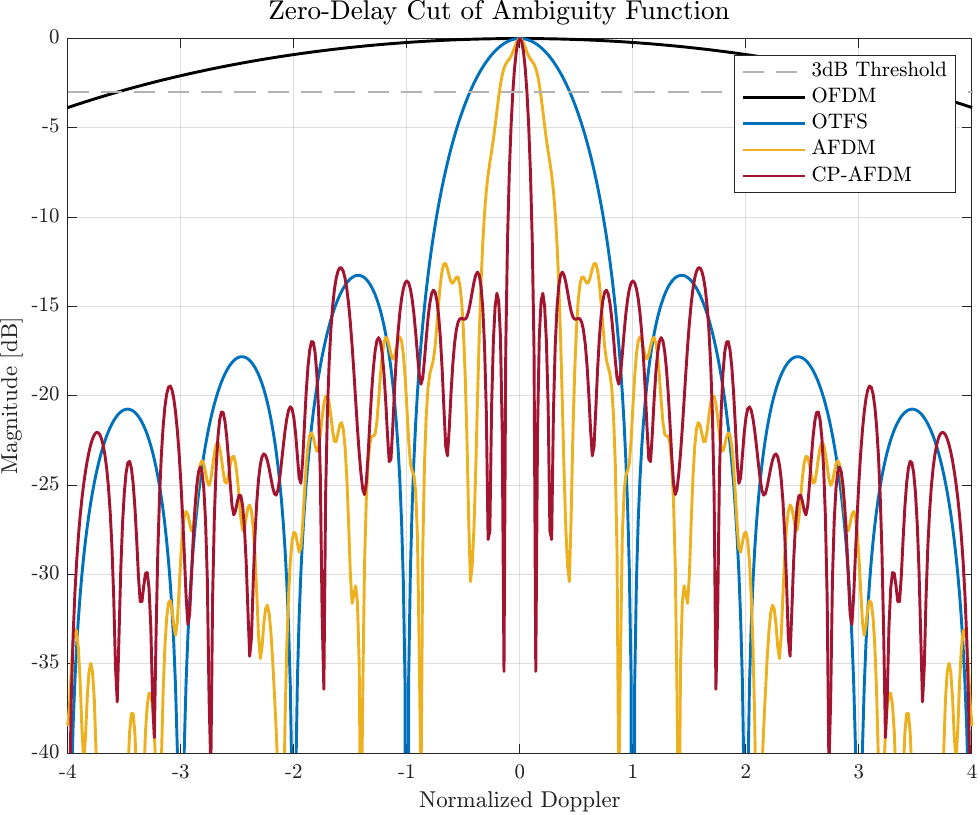}
\subcaption{Zero-Delay cut.}
\label{fig:AF_ones_zeroDelay}
\end{subfigure}

\caption{Instantaneous \acfp{AF} of various waveforms under all-ones transmit symbols, with $N = 64$.}
\label{fig:AF_ones}
\vspace{1ex}
\end{figure*}

\begin{figure*}[t]
\begin{subfigure}[c]{0.5\columnwidth}

\includegraphics[width=\columnwidth]{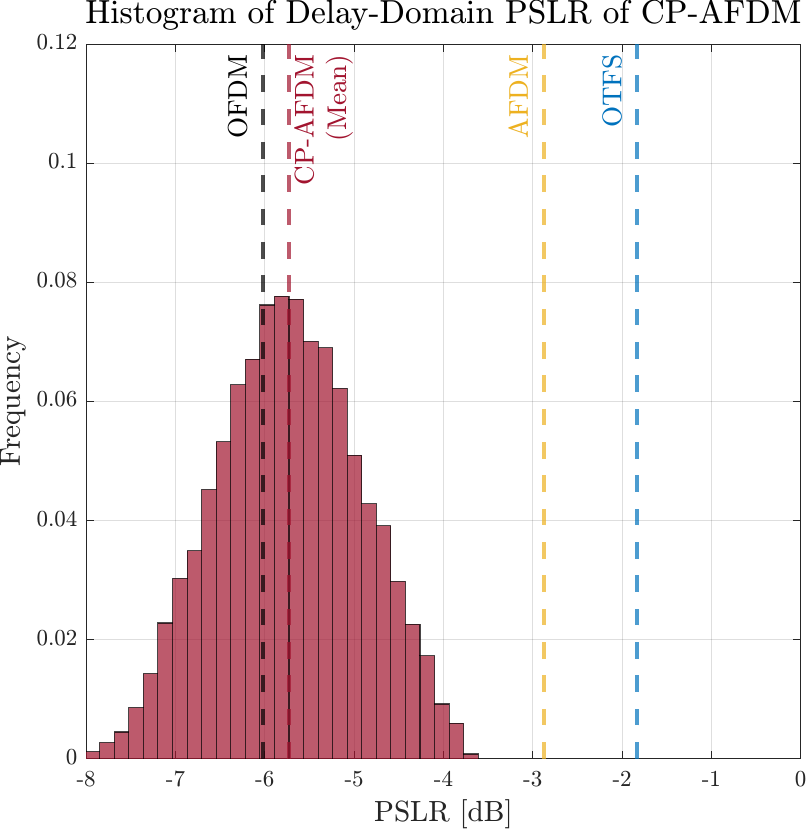}
\subcaption{Delay-domain PSLR.}
\label{fig:hist_delPSLR}
\end{subfigure}
\begin{subfigure}[c]{0.5\columnwidth}
\includegraphics[width=1\columnwidth]{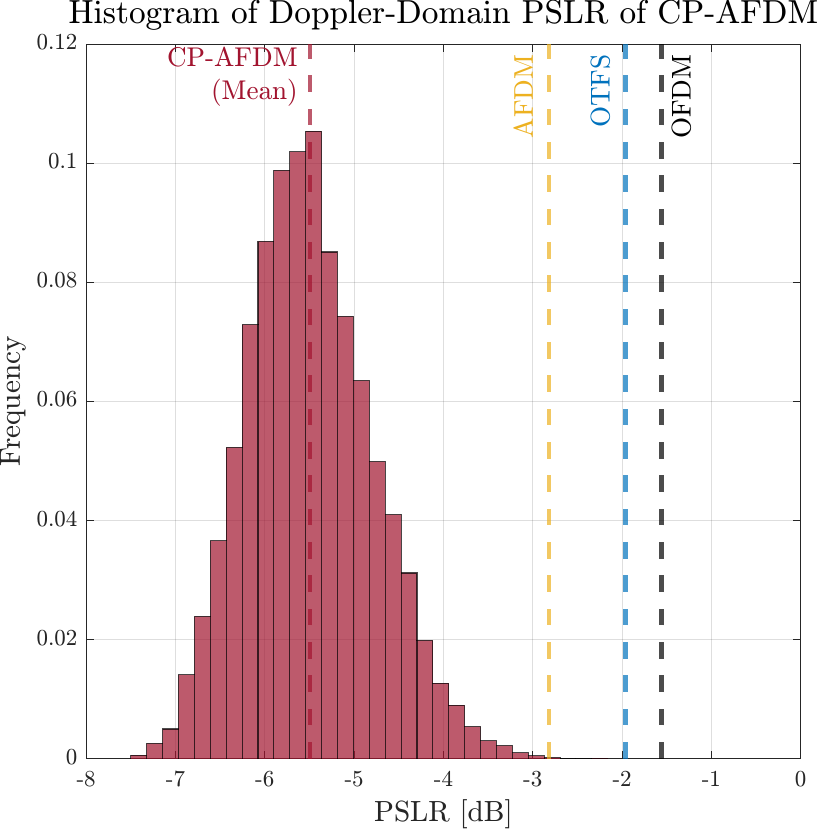}
\subcaption{Doppler-domain PSLR.}
\label{fig:hist_doppPSLR}
\end{subfigure}
\begin{subfigure}[c]{0.5\columnwidth}

\includegraphics[width=\columnwidth]{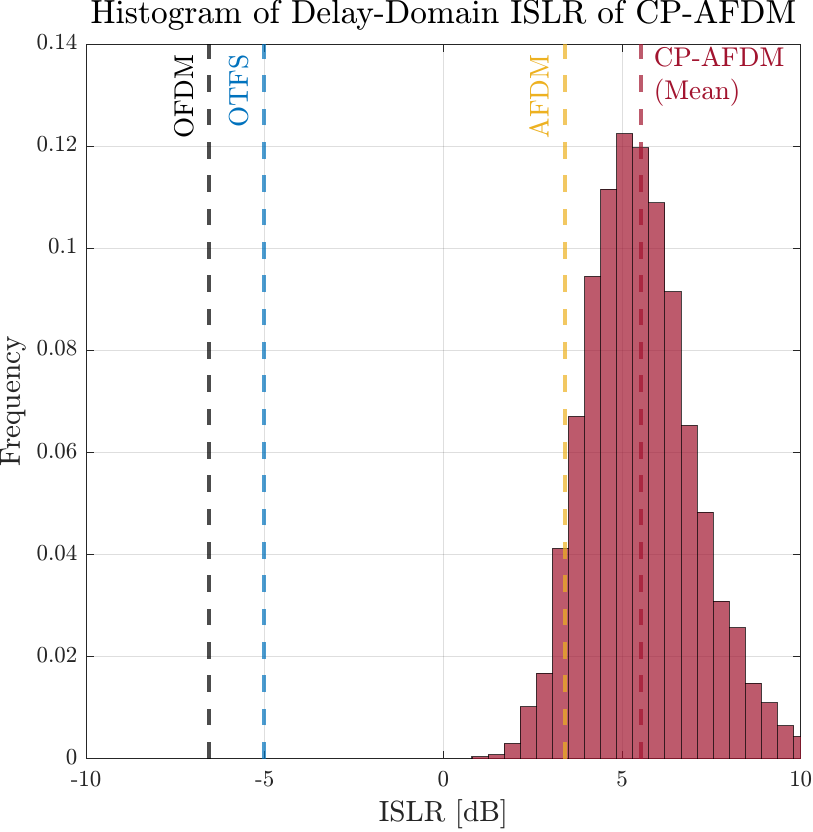}
\subcaption{Delay-domain ISLR.}
\label{fig:hist_delISLR}
\end{subfigure}
\begin{subfigure}[c]{0.5\columnwidth}
\includegraphics[width=1\columnwidth]{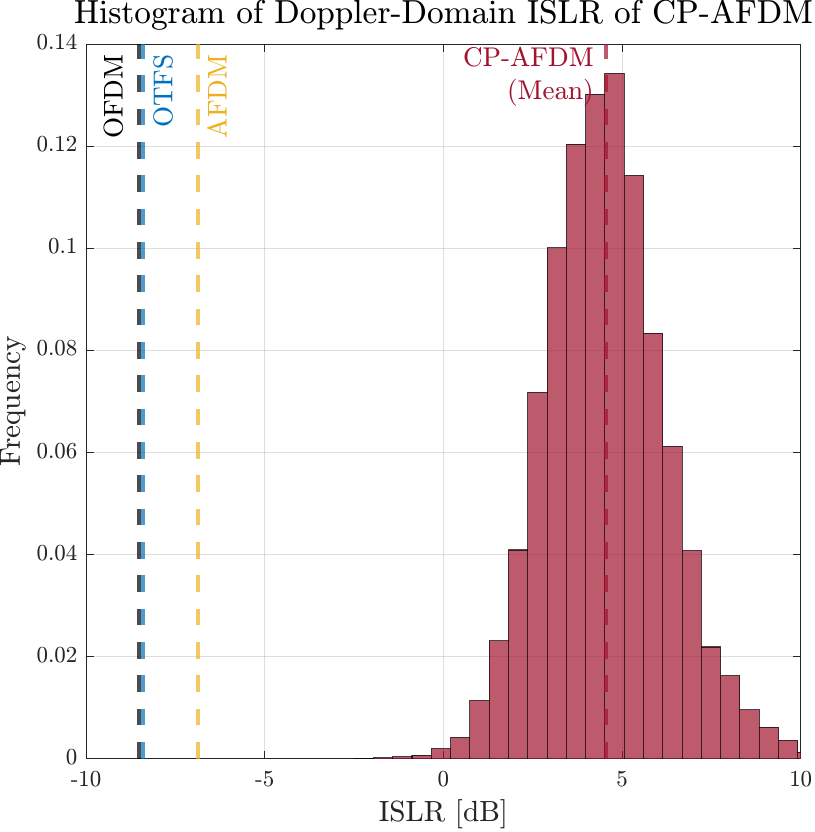}
\subcaption{Doppler-domain ISLR.}
\label{fig:hist_doppISLR}
\end{subfigure}

\caption{Histogram of PSLR and ISLR of CP-AFDM.}
\label{fig:hist_PSLRISLR}
\end{figure*}

However, this gain in \ac{PSLR} is accompanied by a degradation in \ac{ISLR}, reflecting increased energy spread across distant sidelobes. 
While \ac{ISLR} remains an important metric, its impact on practical systems is less critical than that of \ac{PSLR}, as it typically affects only distant sidelobe energy that can be mitigated through receiver-side filtering, sidelobe-aware detection, or windowing \cite{tan2025two,azouz2021design}. 
These approaches are more tractable than compensating for poor resolution or elevated mainlobe ambiguity.

Furthermore, as shown in the histograms of Fig.~\ref{fig:hist_PSLRISLR}, the \ac{PSLR} and \ac{ISLR} performance of \ac{CP-AFDM} depends on the specific permutation applied. 
The mean values correspond to empirical averages across different permutations, yet the comparative advantage remains evident, as the distributions of \ac{CP-AFDM} performance lie distinctly apart from those of other waveforms, which are permutation-invariant.
It is also noteworthy that the mainlobe width, and hence the delay and Doppler resolution, remains unaffected by the choice of permutation in \ac{CP-AFDM}, ensuring consistent resolution regardless of sidelobe variation.

In all, the proposed \ac{CP-AFDM} waveform offers a strategically beneficial ambiguity profile, trading a modest increase in overall sidelobe energy for substantial gains in resolution and main sidelobe suppression. 
This makes it a strong candidate for future wireless systems requiring both spectral efficiency and perceptual robustness.

Finally, the permutation parameter could be optimized to tailor the ambiguity function to application-specific requirements, for example, maximizing \ac{PSLR} or minimizing \ac{ISLR}, depending on whether mainlobe energy concentration or resolution is prioritized.
While potentially highly beneficial, one challenge lies in the determination of the optimal permutation out of all possible candidates of a immense search space, and the further investigation of this permutation optimization is left for future work.

\section{Extended Applications of Proposed \ac{CP-AFDM}}
\label{sec:application}

As clear from the previous sections, the proposed \ac{CP-AFDM} introduces a unique structural feature: the chirp-permutation domain. 
Denoted by the permutation index $i_2 \in \{1,\cdots,N!\}$, this domain represents a latent degree of freedom that enables new avenues in waveform-based system design, while retaining all beneficial properties of the conventional \ac{AFDM} analyzed in the previous Section~\ref{sec:analysis}, including communications robustness to doubly-dispersive channels, \ac{PAPR}, and ambiguity function properties.

In light of the above, this section highlights two exemplary application domains of that leverage the flexibility in \ac{CP-AFDM} design, for enhanced performance and additional functionality in anticipation of next-generation wireless systems.

The first is a novel \ac{IM} technique over the chirp-permutation domain - termed \ac{CPIM}-\ac{AFDM} scheme \cite{rou2024afdm} - which achieves a significantly enhanced communications rate; and the second is a physical-layer security framework based on the chirp-permutation index of the modulator \cite{rou2025chirp}, which provides virtually-perfect security against eavesdropping, even against high-performance eavesdroppers equipped with quantum computers.

\vspace{-2ex}

\subsection{Chirp-Permutation Index Modulation}
\label{sec:application_cpi_modulation}

The chirp-permutation domain inherent to the proposed \ac{CP-AFDM} waveform presents a vast combinatorial space of cardinality $N!$, all of which retain the same effective delay-Doppler channel structure and the beneficial properties, yet with vastly different channel coefficient arrangements.
Motivated by this observation, we proposed a novel waveform-embedded modulation scheme, termed \ac{CPIM} \cite{rou2024afdm}.

In the proposed scheme, additional information is encoded unto the signal by selecting a specific permuted \ac{DAFT} matrix $\mathbf{A}_{k} \in \mathbb{C}^{N \times N}$, where $k \in \{1,\ldots, K\}$ is the permutation key index determining the permutation of the one-sided permuted \ac{DAFT} matrix of eq.~\eqref{eq:cpdaft_onesided_matrix}, from a given codebook of $K$ unique permuted \ac{DAFT} matrices, used to modulate the signals.

Trivially, the cardinality of the codebook $K$ can be set to any value between $K = 2$ and maximally $K^{\mathrm{max}} = 2^{\lfloor\log_2(N!)\rfloor}$, where the exponent, or directly $\log_2(K)$, represents the number of bits that can be additionally encoded via the \ac{IM}, on top of the information conveyed by the modulated symbols.
This becomes immensely large even for moderate number of subcarriers $N$, i.e., $N~=~64 \longrightarrow \log_2(K^\mathrm{max}) = 298$, highlighting achievable rate of the proposed method.

Therefore, compared to the conventional \ac{AFDM} with a spectral efficiency of 
\begin{equation}
R^\mathrm{AFDM} \triangleq N\log_2(M) ~~\mathrm{[bits/s/Hz]},
\end{equation}
where $M$ is the modulation order (cardinality of symbol constellation), the proposed \ac{CPIM}-\ac{AFDM} scheme - without any additional transmit energy or sparsity in subcarriers and symbols (typical of conventional \ac{IM} \cite{Rou_TWC22, abu2009subcarrier,ElMai_CAMSAP}) - achieves an increased spectral efficiency of
\begin{equation}
R^{\mathrm{CP}\text{-}\mathrm{AFDM}} \triangleq N\log_2(M) + \log_2(K) ~~\mathrm{[bits/s/Hz]},
\end{equation}
where $K \in \big\{2,\ldots,2^{\lfloor\log_2(N!)\rfloor}\big\}$ is the number of unique chirp-permutation matrices in the codebook $\mathcal{P}$.

An important extension is the design and optimization of the modulator codebook $\mathcal{P}$. 

It should be noted that while the transmitter-side complexity and implementation of the \ac{CPIM}-\ac{AFDM} is minimal, as it only requires a lookup table for the codebook, the corresponding detection problem is significantly more complex than conventional \ac{AFDM} or \ac{OFDM} detection, as it requires the receiver to identify the correct permutation index $k$ from the received signal, which becomes infeasible for exhaustive \ac{ML} methods to detect \ac{IM} schemes for increasing system sizes \cite{Rou_TWC24}. 
To alleviate this challenge, reduced-complexity detection strategies have been proposed in \cite{rou2024afdm}, including an \ac{MMSE}-aided reduced-\ac{ML} detector and a quantum-accelerated search method based on the Grover adaptive search algorithm \cite{gilliam2021grover}, with the latter recently attracting increasing interest for large-scale \ac{IM} detection problems~\cite{Yukiyoshi_GSM}.

Beyond the joint detection of data symbols and indices, an equally important design consideration lies in the selection of the permutation codebook $\mathcal{P}$. 
This problem is non-trivial due to the combinatorial nature of the search space, with complexity growing factorially with $N$. 

As shown in~\cite{rou2024afdm}, the choice of $\mathcal{P}$ indeed influence the communications performance, as pairwise distance properties between the \ac{DAFT} codewords are considered.
Accordingly, the codebook design can be formulated as a discrete optimization problem - specifically, a dispersion maximization task in the signal space - which aims to enhance distinguishability between codewords, which can be addressed using classical combinatorial techniques and necessarily quantum optimization algorithms especially for larger system sizes~\cite{Yukiyoshi_TEQ}.

\subsection{Chirp-Permutation-based Physical Layer Security}
\label{sec:application_secure}

This subsection introduces a physical-layer security scheme based on the proposed \ac{CP}-\ac{AFDM} waveform and its inherent permutation domain, as presented in~\cite{rou2025chirp}. 
The approach enables near-perfect confidentiality against eavesdroppers, even under conditions of perfect \ac{CSI} and physical colocation, without requiring any additional transmit energy, hardware complexity, or signalling overhead

Namely, the proposed scheme considers the unique permutation index $k \in \{1, \cdots, N!\}$ used by the transmitter in the \ac{CP}-\ac{IDAFT}, as a shared secret key between legitimate transmitter and receiver.
When the legitimate receiver with the known secret key $k \in \{1, \cdots, N!\}$ detects the signal, it can use the \textit{matched} demodulator (\ac{CP}-\ac{DAFT}) which uses the same chirp-permutation as the transmitter.

\begin{figure}[H]
\vspace{-1ex}
\centering
\includegraphics[width=0.92\columnwidth]{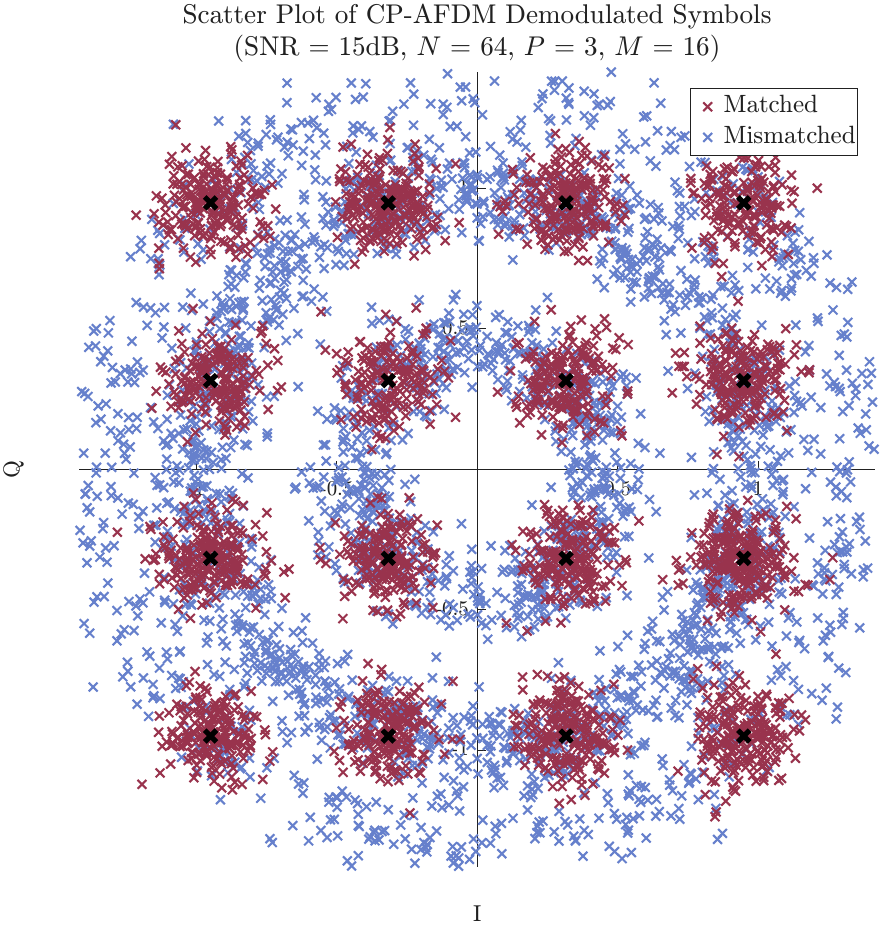}
\vspace{0.5ex}
\caption{A visualization of the demodulated symbols of the proposed \ac{CP}-\ac{AFDM} scheme using a $16$-QAM constellation, comparing the results of a matched demodulation and a mismatched demodulation, highlighting the phase-blurring effect incurred by the permutation-mismatch.}
\label{fig:physec_demodscatter}
\end{figure}

However, an eavesdropper lacking knowledge of the correct permutation index $k$ must resort to blind guessing over the full set of $N!$ valid candidates. 
If a \textit{mismatched} demodulator is employed, i.e., a \ac{CP}-\ac{DAFT} configured with an incorrect index $k' \in \{1, \cdots, N!\}$ where $k' \neq k$,  it has been shown in~\cite{rou2025chirp} that the received signal contains no extractable information, resulting in a flat \ac{BER} curve over all \ac{SNR} levels, matching the performance of a completely blind guessing of the symbols.
In addition, a phase-blurring effect is also induced by the mismatch in the permutation kernel, as illustrated in Fig.~\ref{fig:physec_demodscatter}, further consolidating the security of the proposed scheme.

Consequently, the only way for the eavesdropper to accurately detect the symbols is to find the secret permutation index $k$ from an exponentially large space. 
A detailed analysis in~\cite{rou2025chirp} confirms that both exact and near-correct guesses lead to infeasibly low success probabilities, of orders of magnitude that are near zero.
In addition, even under a brute-force search strategy, the associated computational complexity renders the attack infeasible, including when accelerated by quantum computers - where the required quantum resources exceed projected technological capabilities for the foreseeable future.

In all, compared to conventional physical-layer security techniques that often require \ac{CSI}-dependent precoding, or additional hardware and energy-consuming artificial noise generation, the proposed permutation-domain approach offers a lightweight and hardware-friendly alternative, which also achieves practically perfect physical layer security without incurring any additional power or signaling overhead.

\hfil

\section{Conclusion}
\label{sec:conclusion}

We proposed \ac{CP-AFDM}, a novel multicarrier waveform that extends \ac{AFDM} by introducing a flexible chirp-permutation domain. 
The theoretical properties and numerical performance of the \ac{CP-AFDM} is rigorously analyzed, confirming that all key characteristics of \ac{AFDM}, such as robustness to doubly-dispersive channels, \ac{PAPR}, full delay-Doppler representation, and channel statistics, are preserved, while the Doppler resolution and Doppler-domain \ac{PSLR} are improved.

Additionally, the novel degree of freedom in the permutation domain is leveraged, and presents two exemplary integrated applications: a high-rate index modulation scheme and a practically-perfect, lightweight physical layer security.

In all, these enhancements and multifunctionality position \ac{CP-AFDM} as a strong candidate for next-generation wireless systems, capable of enabling both core communication functions under high-mobility and heterogeneous scenarios, and jointly support emerging multifunctional demands such as \ac{ISAC} and secure transmission.

\bibliographystyle{IEEEtran}
\bibliography{ref}

\end{document}